\definecolor{darkblue}{RGB}{0,0,139}
\newcolumntype{B}{>{\bfseries\textcolor{darkblue}}c}
\newcolumntype{C}{>{\bfseries\textcolor{darkblue}}c}
\setlist{nosep}
\newtheorem{theorem}{Theorem}[section]
\newtheorem{lemma}[theorem]{Lemma}
\newtheorem{remark}[theorem]{Remark}
\newtheorem{proposition}[theorem]{Proposition}
\newtheoremstyle{example}
  {.3\baselineskip}{.3\baselineskip}{\normalsize}{0pt}{\bfseries}{.}{5pt plus 1pt minus 1pt}{}
\theoremstyle{example}
\newcommand*\E{\mathop{}\!\textnormal{e}}
\newif\ifreviewcomments
\newcommand{\Rp}{\mathbb{R}_{+}}
\newcommand{\Price}{S}
\newcommand{\MarkSpace}{K}                       
\newcommand{\persec}{\ensuremath{\mathrm{s}^{-1}}}
\newcommand{\perhour}{\ensuremath{\mathrm{h}^{-1}}}
\NewDocumentCommand{\revpt}{ O{} m m }{%
  [\ifreviewcomments
    \textcolor{red}{\textbf{Point~#2:}}~
    \textcolor{blue}{#3}%
  \fi]
}
\shorttitle{Optimal Execution in Intraday Energy Markets}
\shortauthors{Chatziandreou \& Karbach}
\title[mode=title]{Optimal Execution in Intraday Energy Markets under Hawkes Processes with Transient Impact}
\author[]{Konstantinos Chatziandreou}[]
\ead{k.chatziandreou@uva.nl}
\author[]{Sven Karbach}[]
\ead{sven@karbach.org}
\affiliation[]{organization={Korteweg-de Vries Institute for Mathematics and Informatics Institute, University of Amsterdam},
                addressline={Science Park 105-107},
                city={Amsterdam},
                postcode={1098 XG},
                country={Netherlands}}
\begin{document}
\begin{abstract}
This paper investigates optimal execution strategies in intraday energy markets through a mutually exciting Hawkes process model. Calibrated to data from the German intraday electricity market, the model effectively captures key empirical features, including intra-session volatility, distinct intraday market activity patterns, and the Samuelson effect as gate closure approaches. By integrating a transient price impact model with a bivariate Hawkes process to model the market order flow, we derive an optimal trading trajectory for energy companies managing large volumes, accounting for the specific trading patterns of these markets. A back-testing analysis compares the proposed strategy against standard benchmarks such as Time-Weighted Average Price (TWAP) and Volume-Weighted Average Price (VWAP), demonstrating substantial cost reductions across various hourly trading products in intraday energy markets.
\end{abstract}

\begin{keywords}
Optimal Execution \sep Intraday Trading \sep Hawkes Process \sep Electricity Market \sep Price Impact \sep Samuelson Effect
\end{keywords}

\maketitle

\section{Introduction}
Intraday markets play a crucial role in developed energy markets, with trading volumes reaching new peaks in recent years, as observed in the German market in 2024~\cite{EPEXSpotSE}. One of the primary drivers behind this growth is the increasing share of renewable energy, which requires market participants to balance their positions in short-term intraday markets due to the variable and uncertain production from sources like wind and solar. This behavior results in higher trading volumes and elevated price volatility as delivery time approaches, a well-documented phenomenon in commodity and energy markets known as the Samuelson effect~\cite{c706bac4-9c99-3ef5-b2ae-197e56e4542c}.

Empirical studies, including~\cite{Luckner} and \cite{f318e8a6-3566-30d7-a5ea-473f21c64fcd}, have validated the Samuelson effect in short-term energy markets. These studies highlight an exponential rise in order book activity as gate closure nears, with a surge in market and limit order arrivals driven by participants' urgency to balance positions. Such dynamics pose significant operational challenges for energy companies, particularly when managing substantial trading volumes arising from unexpected outages, fluctuations in renewable generation, or strategic trading decisions. The central question for market participants is how to execute these trades efficiently while minimizing costs and mitigating market impact. 

This paper addresses this challenge by developing optimal execution strategies for renewable energy producers in intraday energy markets. The proposed framework aims to minimize trading costs by leveraging a price impact model embedded within a limit order book setting using a mutually exciting Hawkes process to model market order flows. This study provides practical strategies for energy companies active in intraday energy markets by explicitly accounting for the unique trading activity patterns in these markets. In particular, it captures the exponential increase in trading volume approaching gate closure, followed by a sharp decline after market decoupling, where one hour before delivery, international orders are no longer coupled with the German market, and 30 minutes before delivery, trading transitions to local, within-grid area transactions.

\subsection{Stylized Facts of Price Dynamics in Intraday Energy Markets}\label{sec:stylized-facts}

Modeling the dynamics of intraday energy markets requires a nuanced understanding of the limit order book (LOB), particularly at high frequencies where granular trading behavior dominates. The LOB aggregates market orders, limit orders, and cancellations, which collectively shape price movements. Intraday energy markets, being inherently high-frequency, necessitate models that explicitly capture these LOB dynamics to accurately reflect trading conditions.

A distinctive feature of intraday markets is the variation in LOB activity throughout the trading session. Order flow and price changes intensify as delivery approaches, driven by increased urgency to rebalance and improved forecast accuracy. Figure~\ref{fig:midprices} illustrates mid-price fluctuations for hourly delivery products at 18:00, 19:00, and 20:00 on September 30, 2023, showing higher volatility near the beginning and end of the trading sessions.

\begin{figure}
\centering 
\includegraphics[width=\linewidth]{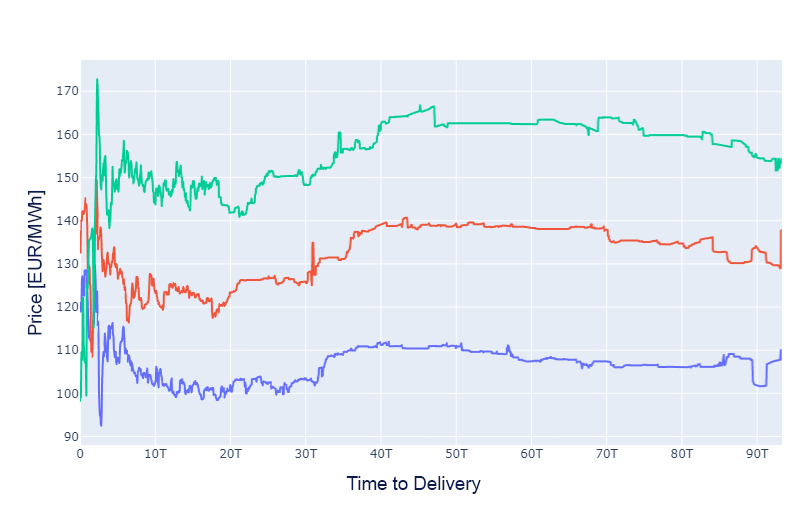} 
\caption{German transaction prices for trading sessions on September 30, 2023, for hourly delivery products at 18:00, 19:00, and 20:00, relative to time to gate closure.}
\label{fig:midprices} 
\end{figure}

Figure~\ref{fig:VarIntensity1} highlights the empirical intensity of buy (see Figure \ref{fig: EmpiricalBaseline} in Section~\ref{sec:complementaryRes} for the sell side) market orders throughout the trading day, which peaks during the final hours before delivery. The midday activity increase, which is highlighted in the empirical intensity of buy orders (see Figure \ref{fig: EmpiricalBaseline} for the sell side of the order book), corresponds to higher liquidity during noon delivery hours (products with a delivery 12:00-15:00), likely driven by Germany's rapid expansion of solar energy capacity~\cite{Ember} as well as the general expected increase of trading activity mid-day. Conversely, activity diminishes sharply during the final trading hour due to market decoupling, which restricts cross-border trading and limits transactions to within Germany. These patterns underscore the importance of modeling the time-varying nature of LOB activity.

Figure~\ref{fig:SignaturePlot3d1} displays \emph{signature plots} of realized volatility for all $24$ German hourly intraday products.  
For a given terminal horizon $T$ and sampling interval $\Delta$ (in seconds), we compute
$$
C_T(\Delta) = \sum_{k=1}^{\lfloor T/\Delta \rfloor}\!\big(S_{t_k}-S_{t_{k-1}}\big)^2,
\qquad t_k = k\,\Delta,
$$
where $S_{t_k}$ denotes the mid-price at time $t_{k}$. The figure overlays four surfaces, corresponding to $T \in \{8,7,6,5\}\,\mathrm{h}$, mapping $\Delta \mapsto C_T(\Delta)$ across the hourly trading products. Two main patterns emerge:  
Firstly, the \emph{Samuelson effect}, i.e. for any fixed $\Delta$, $C_T(\Delta)$ increases as time to delivery shortens (from $T=8$\,h to $T=5$\,h), reflecting higher intraday volatility closer to gate closure in line with~\cite{Deschatre2023,DG22,RePEc:arx:papers:2211.13002}. Secondly, \emph{microstructure noise}, as for each product and horizon $T$, $C_T(\Delta)$ decreases as $\Delta$ grows over the $0$–$300$\,second range. This pattern is consistent with noise-induced inflation of high-frequency quadratic variation and mean reversion at very short scales. If prices followed a pure Brownian motion without microstructure effects, the signature plot would remain flat in $\Delta$, showing no dependence of volatility on the sampling frequency.

These empirical observations highlight the need for models that capture both the rise in volatility as gate closure approaches and the intra-session clustering of activity.  
Hawkes processes provide a natural framework for this task: by allowing the intensity to depend on its own history, they reproduce the self-exciting and clustering behavior characteristic of high-frequency financial data, features that Brownian motion or simpler point processes such as Poisson models cannot adequately represent. In particular, Hawkes dynamics account for the strong autocorrelation observed in mid-price returns and order arrivals in intraday energy markets. Another reason to favor Hawkes processes over, e.g., time inhomogeneous Poisson processes, is that time changed arrival times of market orders cannot be adequately modeled using exponential distributions~\cite{DG22}. Furthermore, we observe that the integrated volatility of the mid-price process varies over the trading horizon, increasing as gate closure nears (see Figure~\ref{fig:SignaturePlot3d1}). Again, this phenomenon reflects the heightened trading intensity and market activity driven by increasingly accurate renewable energy forecasts as delivery approaches. As traders receive more reliable information closer to delivery time, they adjust their positions accordingly, resulting in increased volatility.

\begin{figure*}[!htbp]
  \centering
  \begin{subfigure}[b]{0.49\textwidth}
    \centering
    \includegraphics[width=\linewidth,trim={8pt 14pt 8pt 8pt},clip]{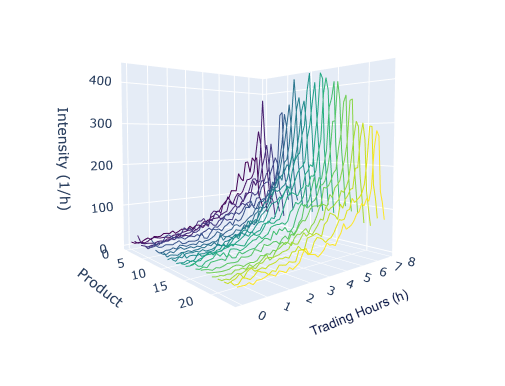}
    \subcaption{Empirical intensity of buy market orders (1/h) over trading \emph{time}.}
    \label{fig:VarIntensity1}
  \end{subfigure}
  \hfill
  \begin{subfigure}[b]{0.45\textwidth}
    \centering
    \includegraphics[width=\linewidth,trim={6pt 10pt 6pt 6pt},clip]{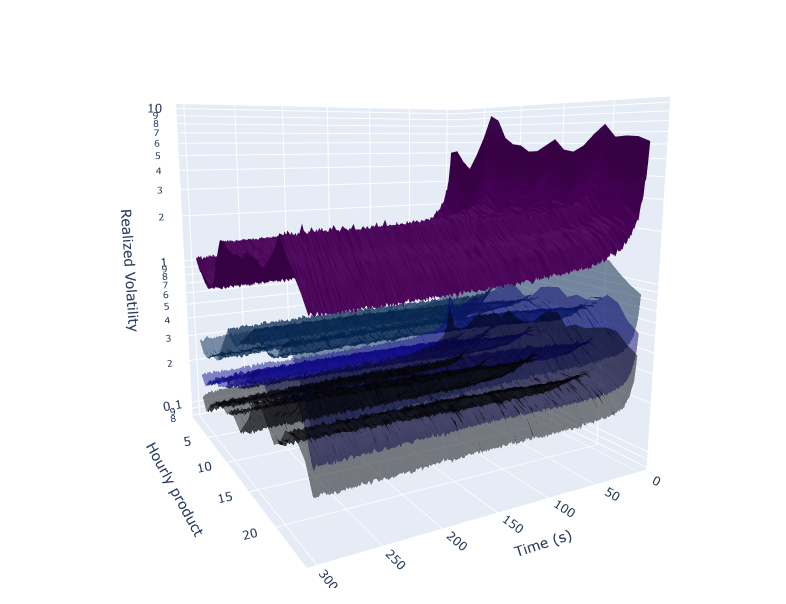}
    \subcaption{Signature-plot surfaces $C_T(\Delta)$ for the 24 hourly products.}
    \label{fig:SignaturePlot3d1}
  \end{subfigure}

  \caption{(a) Empirical buy-side intensity over the trading day.
  (b) Realized-volatility signature surfaces across sampling scales and delivery hours; layered horizons $T\in\{8,7,6,5\}$\,h with x-axis the sampling interval $\Delta$ (s); y-axis the delivery hour/products from 1–24 and z-axis shows $C_T(\Delta)$ on $[0,T]$. 
    Layers correspond to $T\in\{8,7,6,5\}$\,h (counted back from gate closure). 
    $C_T(\Delta)$ decreases in $\Delta$ (indicating microstructure noise) and, at fixed $\Delta$, rises near delivery (showing the Samuelson effect).}
  \label{fig:combined-intensity-signature}
\end{figure*}

 In Section~\ref{sec:model-optimal-exec}, we provide a concise introduction to the  marked Hawkes price model used to describe the market order. Moreover, in Section~\ref{sec:ModelCalibration}, we demonstrate that the Hawkes process framework, with time-varying baseline intensities and event clustering through the self/cross excitation kernels, provides a robust approach to addressing these dynamics. Specifically, we calibrate the Hawkes model to German intraday energy market data using maximum likelihood estimation (MLE), ensuring an accurate fit to the observed market behavior, see Tables~\ref{tab:combinedParams} and~\ref{tab:ks-test-results} below for results on the calibrated parameter space and Section~\ref{sec:GoodnessofFitResults} for a comprehensive study on the goodness-of-fit analysis of the different models. 

Moving beyond traditional parametric approaches, we employ a spline-based method with fixed knots to model baseline intensity changes over time. This non-parametric approach offers the flexibility to capture complex intensity patterns, including the exponential increase in activity as gate closure approaches. Additionally, it accommodates the sharp decline in activity observed during the final trading hour, following market decoupling~\cite{Bal22}, a phenomenon often overlooked in existing studies.

\subsection{Optimal Execution in Intraday Energy Markets} 
Beyond modeling order flow, it is essential to evaluate how these dynamics influence trading strategies. For energy companies managing substantial trading volumes, optimal timing and execution are critical to minimizing both market impact and trading costs. By leveraging Hawkes processes to model intraday dynamics, traders can anticipate periods of high liquidity or volatility, enabling them to adjust their strategies more effectively. While much of the literature on optimal execution focuses on equity markets and price impact models~\cite{almgren2000optimal, Gatheral2013, Obizhaeva2005, Bertsimas1998, NV22}, research on optimal execution in energy markets remains limited. Existing studies~\cite{KZ20, AGP16, glas2020intraday, JCP24} often overlook the granular effects of limit order book (LOB) events on execution prices. Furthermore, most works assume that prices evolve as continuous martingales under passive trading, incorporating all order impacts. However, at high frequencies, large orders exert substantial effects that invalidate these assumptions~\cite{Alfonsi2016,AlfonsiExtension}.

To overcome these limitations, we employ a high frequent, tick-by-tick execution price model based on Hawkes processes, which effectively captures the influence of market orders from other participants in intraday energy markets. High-frequency models~\cite{Bacry2013, BTJF16, LN19} are well-suited for capturing key statistical properties such as autocorrelation in trade signs, volatility clustering and microstructure noise. While these approaches have primarily been applied to equity markets, recent extensions have begun to explore energy markets~\cite{DG22, Deschatre2023}, incorporating marked Hawkes processes to model mid-price jump sizes. These developments lay the foundation for more nuanced and effective optimal execution strategies tailored to the unique characteristics of energy markets.

This paper contributes to the literature on optimal execution in intraday energy markets by combining high frequency price models~\cite{DG22, Deschatre2023} with optimal execution frameworks. Hawkes processes in our model effectively capture the flow of market orders, while price impact models are used to calculate trading costs. Additionally, our approach incorporates time-dependent baseline intensities to reflect the sharp increase in market activity near gate closure~\cite{DESCHATRE2021105504, KIESEL201777, Luckner} and market decoupling thereafter. The transient market impact component, which accounts for the influence of past trades on prices, has only recently been introduced in electricity markets~\cite{JCP24} with focus on energy arbitrage and battery energy storage system optimization problems. To the best of our knowledge, no existing work has implemented a fit of the model and an optimal trading strategy directly to real-market data, making this study a novel contribution to the field. Our backtesting analysis in Section~\ref{sec:optimal-strateg-backtesting} addresses the following key questions: 
\begin{itemize} 
\item What are the transaction costs of the renewable portfolio compared to naive execution strategies such as Time-Weighted Average Price (TWAP) and Volume-Weighted Average Price (VWAP)? 
\item How much transaction cost savings does the optimized strategy provide over TWAP and VWAP? 
\item Should the optimal strategy be applied universally across all hourly products, or selectively, to maximize cost efficiency? 
\end{itemize}

\subsection{Structure of the Paper}

This paper is structured as follows. Section~\ref{sec:model-optimal-exec} introduces the Hawkes price model and formulates the optimal execution problem. It presents the derivation of the optimal execution strategy using Hawkes processes to model buy and sell market orders, as detailed in Theorem~\ref{thm:optimalstrategy}, allowing for a quantitative analysis of execution costs. Section~\ref{sec:ModelCalibration} details the calibration protocol for the model using EPEX Spot LOB data. The Hawkes parameters and its baseline intensity are estimated independently for hourly German intraday market products, and the goodness-of-fit is evaluated with comparisons between models under varying assumptions about event dependencies and intensities. Section~\ref{sec:optimal-strateg-backtesting} presents the back-testing analysis, assessing the optimal execution strategy against TWAP and VWAP benchmarks using real market data from 2023–2024. The appendices include additional results on maximum likelihood estimation (Section~\ref{sec:MLEHawkes}), goodness-of-fit analysis (Section~\ref{sec:GoodnessofFit}), and some complementary findings (Section~\ref{sec:complementaryRes}).

\section{The Hawkes Price Model and Optimal Execution Strategies}

\label{sec:model-optimal-exec}

In this section, we provide an introduction to the Hawkes price model used to capture tick-by-tick price data in intraday energy markets and present a few important properties of the model. Thereafter, we formulate the optimal execution problem for this model.

\subsection{The Marked Hawkes Price Model}

We introduce the Marked Hawkes Price (MHP) model using Poisson random measures and counting processes. Consider $M$ maturities $0 < T_1 < T_2 < \dots < T_M = T$, where $M \in \mathbb{N}$. Let $K \subseteq \mathbb{R}_+$ denote the set of possible marks 
(e.g., price jumps or trade sizes), and let $\mathcal{K}$ be the Borel sigma-algebra on $K$. On a probability space $(\Omega, \mathcal{F}, \mathbb{P})$, we consider $2M$ independent marked point processes representing upward and downward price jumps for each maturity or equivalently, the arrival of buy and sell market orders.

For each maturity $T_m$, where $m = 1, \dots, M$, we define the counting processes $N_m^+(t)$ and $N_m^-(t)$, which count the number of buy and sell market orders up to time $t$, respectively, and set $N_m(t) := \big(N_m^{+}(t),N_m^{-}(t) \big)^{\top}$. The counting processes can be represented via Poisson random measures $\pi_m^+(dt, dy)$ and $\pi_m^-(dt, dy)$ on the measurable space $\left( E, \mathcal{E} \right) = \left( \mathbb{R}_+ \times K, \mathcal{B}\left( \mathbb{R}_+ \right) \otimes \mathcal{K} \right)$, each with intensity measure $\lambda_{m,t}^\pm \, dt \otimes \nu(dy)$, where $\nu$ is a probability measure on $(K, \mathcal{K})$ satisfying $\nu(\{0\}) = 0$. As before we set $\pi_m(dt,dy) = \left(\pi_m^+(dt,dy),\pi_m^-(dt,dy) \right)$. The cumulative upward and downward price jump processes for maturity $T_m$ are given by:

\begin{equation}
    \label{eq:pricemodel}
    S_{m,t}^\pm = \int_0^t \int_{K} y \, \pi^{\pm}_m(ds, dy) = \sum_{k=1}^{N^{\pm}_m(t)} Y_{m,k}^\pm,
\end{equation}

where $Y_{m,k}^\pm$ are the marks associated with the $k$-th upward or downward jump, respectively. The net price process is thus given by

\begin{align}
    \label{eq:netprice}
    S_{m,t} &= S_{m,0} + S_{m,t}^+ - S_{m,t}^- \\
    &= S_{m,0} + \int_0^t \int_{K} y \left( \pi_m^+(ds, dy) - \pi_m^-(ds, dy) \right),\nonumber
\end{align}
where $S_{m,0}$ is the initial price for the product with maturity $T_m$. Note that we do not impose positivity of $\Price_{m,t}$ and prices may be negative, as regularly observed in intraday electricity markets. The set of marks $\MarkSpace\subseteq\Rp$ refers to \emph{non-negative magnitudes} (e.g., absolute tick sizes), while the sign of price changes is carried by the buy/sell decomposition via $N_m^+(t)$ and $N_m^-(t)$.

To capture the clustering of order arrivals, characteristic of Hawkes processes, we define the conditional intensities of the counting processes $N_m(t)$ as functions of the history of past events. More precisely, for $t \geq 0$, we define the vector of conditional intensities $ \lambda_{m,t} = \big( \lambda^+_{m,t},\lambda^-_{m,t} \big)^{\top}$ as  
\begin{equation}
    \label{eq:hawkes_intensity}
    \lambda_{m,t}
    =
    \bar{\lambda}_m(t) 
    +  \int_0^{t} \int_K 
      \phi_{m}^\pm(t-s,y)\,\pi_m(ds,dy),
\end{equation}
where for each hourly product $m = 1,\dots, M$ we denote by
\begin{itemize}
  \item $\bar{\lambda}_m(t) := \big(\lambda_{m,\infty}^{+}(t), \lambda_{m,\infty}^{-}(t)\big)^{\top}$ the deterministic baseline intensity,  
  \item $\phi_{m}^\pm(\cdot,y)$ the positive definite kernel matrix describing the self- and cross-exciting effects between buy and sell market orders. 
\end{itemize}
Thus, each past event of type $(\pm)$ (buy/sell) with mark $y$ increases the intensity $\lambda_{m,t}$ at future times $t>s$ according to the kernel $\phi_{m}^\pm$. The conditional intensities therefore capture dependence on past events through the convolution of the kernels with the counting processes, thereby reflecting the self-exciting and, if present, mutually exciting nature of price jumps. To illustrate the mechanism in a simple setting, Figure~\ref{fig:HawkesExample} shows a simulation of a univariate Hawkes process with exponential decay kernel, where each arrival temporarily increases the intensity, which then decays back toward its baseline. Note that under standard Hawkes stability condition (spectral radius of the kernel matrix $<1$) and finite baseline intensity, the number of jumps on any finite horizon is almost surely finite~\cite{Bacry2013}.

\begin{figure}
    \centering
    \begin{subfigure}[b]{0.4\textwidth}
    \centering
    \includegraphics[width=\textwidth]{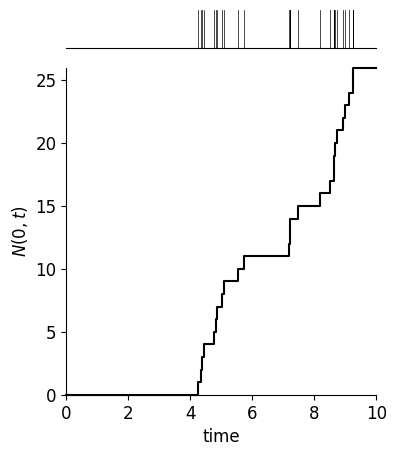}
    \caption{Counting process $N_t$}
    \end{subfigure}
\hfill
    \begin{subfigure}[b]{0.4\textwidth}
    \centering
    \includegraphics[width=\textwidth]{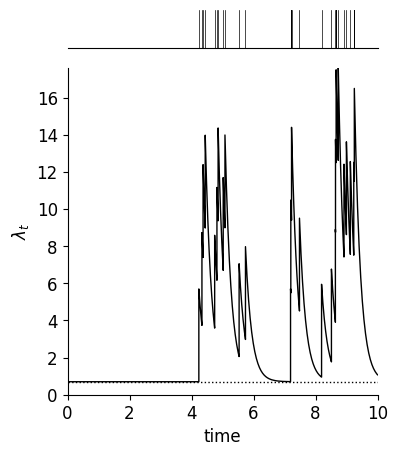}
    \caption{Intensity process $\lambda_t$}
    \end{subfigure}
    \caption{Sample paths of the intensity process and the associated counting process $N_t$ with $\alpha = 1$, $\beta = 5$, and $\lambda_{\infty} = 0.7$. Sample paths of a univariate Hawkes process $N_t$ with intensity $ \lambda_t = \lambda_{\infty} + \int_{0}^{t} \alpha \E^{-\beta(t - s)} \, dN_s$ for $\alpha = 1$, $\beta = 5$, and $\lambda_0 = 0.7$.}
    \label{fig:HawkesExample}
\end{figure}

Note that the family of counting processes $\{N_m(t)\}_{m=1}^M$ forms a $2M$-dimensional Hawkes process.  
In our setting, however, we abstract from cross-product interactions and do not model correlations across different delivery products. Hence, no cross-excitation effects are included between products.
For ease of notation, we therefore fix a single delivery product and drop the index $m$. Buy and sell arrivals are then modeled by a bivariate Hawkes process with exponential decay kernels. Specifically, we let 
$$
\boldsymbol{\lambda}_t := \begin{pmatrix}\lambda_t^+ \\ \lambda_t^-\end{pmatrix}, 
\qquad 
\bar{\boldsymbol{\lambda}}_t := \begin{pmatrix}\lambda_\infty^+(t) \\ \lambda_\infty^-(t)\end{pmatrix},
$$
denote the conditional and baseline intensities (with the latter possibly time-varying).  
We assume that the kernel matrix is separable, of the form

$$
\phi^{\pm}(u,y) = \phi(u) \odot \Phi(y),
$$
where $\odot$ denotes the Hadamard product, $\varphi(u)$ captures the exponential decay in time, and $\Phi(y)$ encodes the mark dependence via

\begin{equation}
\label{eq:impactmatrix}
\Phi(y) = 
\begin{pmatrix}
\varphi^{(s)}(y) & \varphi^{(c)}(y) \\
\varphi^{(c)}(y) & \varphi^{(s)}(y)
\end{pmatrix}, 
\quad \varphi^{(s)}(y), \varphi^{(c)}(y) \geq 0.
\end{equation}
Here, $\varphi^{(s)}$ and $\varphi^{(c)}: K \to \mathbb{R}_+$ denote the mark-dependent self- and cross-excitation amplitudes, where the superscripts $(s)$ and $(c)$ refer to self- and cross-excitation, respectively. In the special case where $\varphi^{(s)}$ and $\varphi^{(c)}$ are constant functions, the model reduces to the standard bivariate Hawkes process.
Throughout, we assume the following moment condition to hold

\begin{align}\label{eq:firstmoments1}
i_{\cdot} := \int_K \varphi^{(\cdot)}\!\left(\frac{y}{m_1}\right)\nu(dy) < \infty,
\end{align}
where $m_1 = \int_K y\,\nu(dy)$ denotes the expected mark size. In the remainder, we standardize the mark sizes by $m_1$ and let the conditional intensity vector be given by
\begin{equation}
\label{eq:intensityprocess}
\boldsymbol{\lambda}_t 
= \bar{\boldsymbol{\lambda}}_t 
+ \int_0^t \phi(t-u)\,\odot\,\Phi
\begin{pmatrix}
    \tfrac{dN^{+}(u)}{m_1} \\
    \tfrac{dN^{-}(u)}{m_1}
\end{pmatrix},
\end{equation}

where the integral is understood component-wise and we note that $\int_0^t \varphi^{(s)}\left(\tfrac{dN^{+}(u)}{m_1}\right) = \int_0^t \int_K \varphi^{(s)} \left( y/m_1\right)\pi^+\left( du,dy\right)$.

 The self-excitation function $\varphi^{(s)}$ captures the effect of past events of the same type (buy or sell) on the current intensity of that type, while the cross-excitation function $\varphi^{(c)}$ captures the effect of past events of the opposite type on the current intensity. The conditions in~\eqref{eq:firstmoments1} ensure that the expected contributions from the excitation functions are finite, which is necessary for the process to be well-defined. 

To model the volatility components of the price model, we further assume that the second moments of the excitation functions are finite, i.e.
\begin{align*}
    \int_K \left[ \varphi^{(\cdot)}\left( \dfrac{y}{m_1} \right) \right]^2 \nu(dy) < \infty.
\end{align*}

Assuming the shape of intraday seasonality is known, we define the \emph{deseasonalized} intensity process as  
\begin{equation}\label{eq:intensityprocessdeseasonal}
   \tilde{\boldsymbol{\lambda}}_t := \boldsymbol{\lambda}_t - \bar{\boldsymbol{\lambda}}_t, 
   \qquad t \geq 0,
\end{equation}
where $\tilde{\boldsymbol{\lambda}}_t=(\tilde{\lambda}^+_t,\tilde{\lambda}^-_t)^\top$ represents the residual intensity obtained by removing the baseline $\bar{\boldsymbol{\lambda}}_t$.

We recall the empirical observation of increasing intensity and volatility as time to maturity decreases, known as the Samuelson effect (see Figure~\ref{fig:VarIntensity1}). Incorporating a time-varying baseline intensity in~\eqref{eq:intensityprocess} effectively captures this phenomenon. While most studies use a parametric approach, typically assuming an exponential function for the baseline intensity, we propose a more flexible non-parametric approach using spline functions, such as cosine bump functions, see Section~\ref{sec:estimation-hawkes} below.

\subsection{Optimal Execution}
\label{sec:OptimalExecution}

In this section, we describe the framework of the optimal execution problem in our model class. Throughout this section, we consider a single asset, specifically an hourly product in the intraday energy market, and denote its price at time $t$ by $P_t$. We assume that $P_t$ can be decomposed into a fundamental price component $S_t$ given by~\eqref{eq:netprice} and a transient price deviation $D_t$, such that: 
\begin{align}\label{eq:executionprice}
    P_t = S_t + D_t.
\end{align}

We define $N_t$ as the cumulative signed volume of past market orders \emph{excluding the trader's own trades} up to time $t$, given by $$N_t := N_t^+ - N_t^-,$$ where $N_t^+$ and $N_t^-$ are the cumulative volumes of buy and sell orders from other market participants, respectively. By convention, a buy order increases $N_t$ and $S_t$, while a sell order decreases them. 

Under the assumption that an order modifies the price proportionally to its size, and operating within the framework of a block-shaped limit order book, we assume that a part $\mu \in [0,1]$ of the price impact is permanent and the remainder $1 - \mu$ is transient, decaying exponentially at rate $\rho > 0$. Accordingly, before considering the trader's own trading activity, we model the dynamics of $S_t$ and $D_t$ as:
\begin{align}\label{eq:allprices}
    dS_t &= \mu \, dN_t, \\
    dD_t &= - \rho D_t \, dt + (1 - \mu) \, dN_t.
\end{align}

Our goal is to specify how the trader's trading activity modifies the price and to determine the cost induced by their trading. We consider a trader aiming to liquidate an initial position $x_0 > 0$ over a given time interval $[0, T]$. Let $X_t$ represent the trader's cumulative trading volume up to time $t$ (with $X_0 = 0$ and $X_T = x_0$). We assume that the trader's trading activity affects the processes $S_t$ and $D_t$ as follows:
\begin{align}
    dS_t &= \mu \, dN_t + \epsilon \, dX_t, \\
    dD_t &= - \rho D_t \, dt + (1 - \mu) \, dN_t + (1 - \epsilon) \, dX_t,
\end{align}
where $\epsilon \in [0,1]$ represents the proportion of the trader's price impact that is permanent. For simplicity, and since we do not impose additional assumptions on the information structure of other market participants,  we set $\epsilon = \mu$, i.e., the trader's impact is assumed to be identical to that of the other participants.

Analogously to some existing literature (see \cite{Alfonsi2012, Obizhaeva2005}; for intraday markets, see~\cite{KZ20}), we will assume throughout a block-shaped limit order book. When the trader executes an order of size $\nu \in \mathbb{R}$ (with $\nu > 0$ for a buy order and $\nu < 0$ for a sell order), they incur a cost:
\begin{align} \label{eq:blockshaped}
    \pi_t(\nu) = \int_0^{\nu} \left( P_t + y \right) dy = P_t \nu + \frac{1}{2} \nu^2,
\end{align}
since the post-trade price is $P_{t+} = P_t + \nu$. This cost corresponds to trading all the volume at the average price $\frac{1}{2}(P_t + P_{t+})$.

Let $D_X$ be the (at most countable) set of jump times of $X$ and $\Delta X_{\tau}:= X_{\tau_+} - X_{\tau}$. Summing the block costs over trades before terminal time $T$ and adding the terminal block needed to liquidate any residual inventory $X_T$ at time $T$ yields the following liquidation cost
\begin{align}\label{eq:costfunction-ab}
C(X) 
&= \int_{[0,T)} P_u\, dX_u 
 + \frac{1}{2}\sum_{\tau\in D_X\cap[0,T)} (\Delta X_\tau)^2\nonumber\\
 &\quad - P_T X_T + \frac{1}{2} X_T^2.\\
 & = \int_{0}^{T}\! P_u \, dX_u +\! \frac{1}{2} \int_{0}^{T}\! d[X]_u \!- P_T X_T \!+ \frac{1}{2} X_T^2.\nonumber
\end{align}
For a simple strategy with jumps $\{\tau_i<T\}$, the total paid is 
$\sum_i \big(P_{\tau_i}\Delta X_{\tau_i} + \tfrac{1}{2}(\Delta X_{\tau_i})^2\big)$ which follows by \eqref{eq:blockshaped} and block-shaped assumption. 
If $X_T\neq 0$, liquidation at $T$ adds the terminal execution cost $P_T(-X_T)+\tfrac{1}{2}X_T^2$. 
Passing to the c\`adl\`ag limit gives \eqref{eq:costfunction-ab}. The Stieltjes integral $\int_{[0,T)} P_u\,dX_u$ already accounts for the $P_{\tau_i}\Delta X_{\tau_i}$ terms. Hence by denoting with $[X]_t$ the quadratic variation of $X$ up to time $t$ and letting $X_t = X_t^c + \sum_{\tau\in D_X\cap[0,T)} \left( X_{\tau}^+ -  X_{\tau}\right)$ the last equality in \eqref{eq:costfunction-ab} follows.

\begin{remark}
As we can see from the definition of the market order arrival rates given by equation~\eqref{eq:intensityprocessdeseasonal}, the trader's orders (i.e., our trading activity) do not impact the jump rates $\lambda^+$ and $\lambda^-$, since there is no additional $dX_t$ term influencing the arrival rates of subsequent market orders. This contrasts with the market orders issued by other traders (the market activity which we calibrate directly from the LOB). The main reason for this modeling choice is tractability and the assumption that each individual trader tends to send several orders of the same sign in a row, which in turn creates autocorrelation in the signs of trades~\cite{Alfonsi2016} and this effect is stronger than the mutual excitation between different traders. However, recent work by Horst et al.~\cite{CHTH23} accounts for the fact that the trading activity of a large market participant, whose trades may impact the market, triggers child orders and endogenously affects the future order flow.
\end{remark}

In this setting, we study the optimal execution problem under the assumption that buy and sell market order flows follow Hawkes processes. The arrival times of buy and sell orders are modeled as the jump times of two dependent Hawkes processes, $N^+$ (buy) and $N^-$ (sell), with conditional instantaneous intensities defined by~\eqref{eq:intensityprocessdeseasonal}. The dependence between these processes is captured by the impact matrix in Equation \eqref{eq:impactmatrix}.

We now derive the optimal execution strategy in the most general setting.  
In the empirical results and backtesting section, we will study two different modeling approaches.  
A key feature of the Marked Hawkes price model is that the optimal execution problem can be solved in closed form: the linear price impact and the exponential decay kernels of the Hawkes processes ensure Markovian dynamics.  
More generally, Markovian representations can also be recovered for completely monotone decay kernels, as shown by Alfonsi et al.~\cite{Alfonsi2012}.  
For clarity throughout this section, we work in the \emph{deseasonalized} case assuming the same constant baseline intensities for buy and sell side.

Before introducing the optimal trading strategy and the main theorem, we define auxiliary variables that simplify the dynamics.  
Instead of working directly with $\lambda_t^+$ and $\lambda_t^-$, we consider
\[
  \kappa_t = \lambda_t^+ - \lambda_t^-, 
  \qquad 
  \gamma_t = \lambda_t^+ + \lambda_t^-,
\]
where $\kappa_t$ describes the \emph{order-flow imbalance} and $\gamma_t$ the \emph{total order intensity}.  
By the variation-of-constant formula, these processes satisfy the following stochastic differential equations:
\begin{align*}
d\kappa_t &= -\beta \kappa_t\, dt + dI_t,\\
d\gamma_t &= -\beta \big( \gamma_t - \lambda_{\infty} \big) dt + d\bar{I}_t,
\end{align*}
where $\lambda_{\infty} = \lambda_{\infty}^+ + \lambda_{\infty}^-$ is the sum of baseline intensities with $\lambda_{\infty}^+=\lambda_{\infty}^-$, and the processes $I_t$ and $\bar{I}_t$ encode the impact of arrivals and are given by
\begin{align}\label{eq:intensitiesvol2}
I_t &=\! \int_0^t\! \big( (\varphi^{(s)} - \varphi^{(c)}) \tfrac{dN_s^+}{m_1} - (\varphi^{(s)} - \varphi^{(c)}) \tfrac{dN_s^-}{m_1} \big),\\
\bar{I}_t &=\! \int_0^t\! \big( (\varphi^{(s)} + \varphi^{(c)}) \tfrac{dN_s^+}{m_1} + (\varphi^{(s)} + \varphi^{(c)}) \tfrac{dN_s^-}{m_1} \big).
\end{align}

Let $\left( \tau_i \right)_{i \in \mathbb{N}}$ denote the ordered random jump times of the counting process $N$, with $\tau_0 = 0$. For $t \in [0, T]$, let $\chi_t$ denote the total number of jumps of $I$ that have occurred between time $0$ and $t$. Then, the process $\kappa_t$ can be expressed as:

\begin{equation} \label{eq:delta_t}
\kappa_t = \kappa_0 \E^{-\beta t} + \sum_{\ell = 1}^{\chi_t} \E^{-\beta (t - \tau_{\ell})} \Delta I_{\tau_{\ell}} = \E^{-\beta t} \left( \kappa_0 + \Theta_{\chi_t} \right),
\end{equation}

where $\Delta I_{\tau_{\ell}}$ denotes the jump of $I$ at time $\tau_{\ell}$, and we define:

\begin{equation} \label{eq:Theta_i}
\Theta_i = \sum_{j = 1}^{i} \E^{\beta \tau_j} \Delta I_{\tau_j}, \quad i \geq 1,
\end{equation}

with $\Theta_0 = 0$. This representation shows that for $i \geq 0$ and $t \in [\tau_i, \tau_{i+1})$, the quantity $\E^{\beta t} \kappa_t = \kappa_0 + \Theta_i$ depends only on the number of jumps up to time $t$, i.e., $\chi_t = i$.

The jump intensity of $(N_t)_{t\geq 0}$ is determined by the Markov processes $\kappa_t$ and $\gamma_t$ as defined in Equations~\eqref{eq:intensitiesvol2} taking values in $\mathbb{R}$ and $\mathbb{R}_{+}$, respectively. 
 The state variables of the optimization problem are then $\left( X_t, D_t, S_t, \kappa_t, \gamma_t \right)$, and the control is the trading strategy $\left( X_t \right)_{t \in [0, T]}$, with $X_0 = x_0$.

Our objective is to minimize the expected cost associated with the trading strategy $X$. The cost functional on a generic time interval $[t, T]$ associated with the liquidation strategy $X$ is given by

\begin{align}\label{eq:expected_cost} C(t, X) &= \int_{t}^{T} P_u dX_u + \dfrac{1}{2} \int_{t}^{T} d[X, X]_u - P_T X_T \nonumber\\
&\quad + \dfrac{1}{2} X_T^2. 
\end{align}

Let $\mathcal{A}_t$ denote the set of admissible strategies on $[t, T]$. The value function of the optimal control problem is then defined as:

\begin{equation}
\label{eq:value_function}
\mathcal{C}(t, Z_t) 
= \inf_{X \in \mathcal{A}_t} 
\mathbb{E}\left[ C(t, X) \,\big|\,Z_t=(x,d,z,\kappa,\gamma)\right],
\end{equation}
for $Z_t=(X_t, D_t, S_t, \kappa_t, \gamma_t)$.
The terminal condition for the value function is:

\begin{align}\label{eq:terminal_condition} 
\mathcal{C}(T, x, d, z, \kappa, \gamma) = - (d + z) x + \dfrac{1}{2} x^2. 
\end{align}

The optimal strategy $X^*$ can be derived analytically in this model due to the Markovian and affine structure of the state space and the fact that the cost function (objective) is a linear-quadratic function of the state variables~\cite{AlfonsiExtension}. For this, one defines a continuously differentiable function $\mathcal{C}(t, x, d, z, \kappa, \gamma)$ a priori along with an admissible strategy $X^*$, and define the process $\Pi_t(X)$ as
\begin{align}
\label{eq:mainprocess}
   \!\! \Pi_t(X) = \!\int_0^t\! P_u\, dX_u \!+\! \frac{1}{2} \int_0^t \!d[X, X]_u + \mathcal{C}(t,Z_t).
\end{align}

Finally, one has to verify that $\Pi_t(X)$ is a submartingale for any admissible strategy $X$, and $\Pi_t(X^*)$ is a martingale for $X^*$ the solution of the optimal control problem~\eqref{eq:value_function}. This optimality criterion gives then rise to the following theorem. 

\begin{theorem}\label{thm:optimalstrategy}
    Let $\tilde{\alpha} = i_s - i_c$ and $\eta = \beta - \tilde{\alpha}$ and define the two continuously differentiable functions $\zeta, \omega: \mathbb{R} \to \mathbb{R}$ as
\begin{equation}
\label{eq:zeta_omega_def}
\begin{alignedat}{2}
\zeta(y) &= \begin{cases}
1, & y = 0, \\
\frac{1 - \E^{-y}}{y}, & y \neq 0,
\end{cases}
\qquad
\omega(y) &= \begin{cases}
\frac{1}{2}, & y = 0, \\
\frac{\E^{-y} - 1 + y}{y^2}, & y \neq 0.
\end{cases}
\end{alignedat}
\end{equation}
Let $h := T - t$. Then, the optimal strategy satisfies
 \begin{align}
 \label{OptimalStrategyFull1}
         X_t^* &= - (1 - \mu)^{-1}[1 + \rho h] D_t + \frac{m_1}{2 \rho} [2 + \rho h] \\
        &\quad \times \kappa_t \left[ 1 + \frac{\rho h}{2 + \rho h} \left( \zeta\left( h \eta \right) + \mu \rho h\, \omega\left( h \eta \right) \right) \right]\nonumber,
    \end{align}
where $\kappa_t = \lambda_t^+ - \lambda_t^-$ is the intensity order flow imbalance as before.
\end{theorem}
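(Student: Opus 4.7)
The approach is the verification theorem outlined in the paragraph preceding the statement: propose a candidate value function $\mathcal{C}(t, x, d, z, \kappa, \gamma)$ together with an admissible candidate $X^{*}$, use Itô's formula to show that $\Pi_t(X)$ in~\eqref{eq:mainprocess} is a submartingale for every admissible $X$ and a martingale for $X^{*}$, and then conclude by optional sampling that $X^{*}$ realizes the infimum in~\eqref{eq:value_function}. Because the dynamics of $(X_t, D_t, S_t, \kappa_t, \gamma_t)$ are linear and the cost~\eqref{eq:expected_cost} is quadratic in $(X, P)$, the natural ansatz is a general quadratic form in $(x, d, \kappa)$ plus an affine dependence on $z$ and $\gamma$ with coefficients depending only on $h = T - t$, matched at $h = 0$ to the terminal condition~\eqref{eq:terminal_condition}.

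\textbf{Drift minimization and the ODE system.} Applying Itô's formula for jump semimartingales to $\mathcal{C}(t, Z_t)$ and adding the incremental cost $P_u\,dX_u + \tfrac{1}{2}d[X,X]_u$ decomposes $\Pi_t(X)$ into a local martingale part, driven by the compensated Hawkes measures $\pi^{\pm} - \lambda_u^{\pm}\,du\otimes\nu(dy)$ with $\lambda_u^{\pm} = (\gamma_u \pm \kappa_u)/2$, and a predictable finite-variation drift that is linear-quadratic in the control increments $dX_t^{c}$ and $\Delta X_t$. Pointwise minimization of this drift yields an affine formula for the optimal jump/rate in terms of $(D_t, \kappa_t)$, with coefficients built from the quadratic coefficients of $\mathcal{C}$; substituting this optimum back into the drift and requiring it to vanish produces a triangular system of first-order linear ODEs for the quadratic coefficients, with decay rates $\rho$ (from $D_t$) and $\eta = \beta - \tilde{\alpha}$ (the effective mean-reversion rate of $\kappa_t$ once the Hawkes compensator of $I_t$ in~\eqref{eq:intensitiesvol2} is absorbed into the drift).

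\textbf{Closed form via $\zeta$ and $\omega$.} Integrating this ODE system from $0$ to $h$ produces primitives of the form $\int_{0}^{h} e^{-\eta u}\,du$ and $\int_{0}^{h} u\,e^{-\eta u}\,du$ which, after normalization by $h$ and $h^{2}$, coincide with $\zeta(h\eta)$ and $\omega(h\eta)$ in~\eqref{eq:zeta_omega_def}; the piecewise definitions at $y=0$ absorb removable singularities and keep the coefficients continuously differentiable in $\beta$ and $\tilde{\alpha}$. Plugging the solved coefficients into the first-order condition reproduces~\eqref{OptimalStrategyFull1}: the $-(1-\mu)^{-1}[1+\rho h]D_t$ contribution arises from the $d$-dependent coefficients together with the integration of the $\rho$-decay, while the $\kappa_t$-dependent factor bundles the two source terms coming from the Hawkes compensator into the combination $\zeta(h\eta) + \mu\rho h\,\omega(h\eta)$.

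\textbf{Main obstacle.} The hardest part is the Itô bookkeeping: one must simultaneously track the controlled jumps of $X_t$ (which also move $S_t$ and $D_t$ through the trader-impact dynamics), the Hawkes jumps of $(N^{+}, N^{-})$ weighted by the marked impact matrix $\Phi$ from~\eqref{eq:impactmatrix}, and the induced jumps they produce on $(\kappa_t, \gamma_t)$, while correctly collecting all cross-terms between the trader's own trades and the other-participant order flow. Once this is done, the verification direction demands convexity of the Hamiltonian in the controls, which reduces to non-negativity of the leading quadratic coefficient; this is guaranteed by the block-shape cost~\eqref{eq:blockshaped} together with the standard Hawkes stability condition ensuring $\eta > 0$.
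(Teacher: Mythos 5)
Your plan is essentially the paper's intended argument: the paper proves Theorem~\ref{thm:optimalstrategy} by directly invoking \cite[Theorem 4.1]{Alfonsi2016} and \cite[Theorem 2.2]{AlfonsiExtension} for mono-exponential kernels, and the verification scheme you describe (quadratic ansatz for $\mathcal{C}$ matched to the terminal condition, drift decomposition of $\Pi_t(X)$, a triangular system of linear ODEs in $h$ whose solution is expressed through $\zeta$ and $\omega$, and a first-order condition affine in $(D_t,\kappa_t)$) is exactly the proof of those cited results. Two small imprecisions that do not affect the route: $\omega(h\eta)$ arises from the integrating-factor primitive $h^{-2}e^{-\eta h}\int_0^h s\,e^{\eta s}\,ds$ rather than from $h^{-2}\int_0^h s\,e^{-\eta s}\,ds$; and the submartingale/verification step does not require $\eta>0$, since the drift of the bookkeeping process is exhibited as $\rho(1-\mu)^{-1}$ times a perfect square in $(X_t,D_t,\kappa_t)$ and is therefore nonnegative whenever $\rho>0$ and $\mu<1$.
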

\begin{proof}
This follows from~\cite[Theorem 4.1]{Alfonsi2016} and \cite[Theorem 2.2]{AlfonsiExtension} under the assumption of mono-exponential kernels for the Hawkes intensity and the propagator kernel.
\end{proof}

\section{Model Calibration and Backtesting}\label{sec:ModelCalibration}

This section outlines the methodology for calibrating the model presented in Section~\ref{sec:model-optimal-exec}. The price model in Equation~\eqref{eq:allprices} consists of two components, allowing for an independent calibration of each part. We focus on the estimation of the Hawkes process parameters. Finally, we relate the market impact parameters to the microstructure of the limit order book (LOB) and demonstrate how these can be estimated using publicly available EPEX Spot market data. 

\subsection{Estimation of the Hawkes Parameters}\label{sec:estimation-hawkes}

We estimate the Hawkes process parameters using maximum likelihood estimation (MLE), which is established in the literature (e.g., \cite{Ozaki1979MaximumLE}). The log-likelihood for multivariate Hawkes processes, along with its gradient and Hessian, is detailed in Appendix~\ref{sec:MLEHawkes}. Numerical optimization is performed using a quasi-Newton method.

To capture the increasing intensity of buy and sell market order (MO) arrivals as gate closure approaches, as observed in Section~\ref{sec:stylized-facts}, we model the baseline intensity in~\eqref{eq:intensityprocess} as time-dependent. Assuming all market orders have unit volume simplifies the impact kernel in~\eqref{eq:impactmatrix} to a unit matrix.

For the baseline intensity, we employ a non-parametric approach using cubic B-splines and cosine bump functions. B-splines are particularly effective for modeling time-varying intensity patterns, such as the U-shape observed in equity markets~\cite{Chen_Hall_2013, cartea2015algorithmic}. To our knowledge, this non-parametric framework has not been applied to intraday energy markets. The baseline intensity is defined as:
$$
\lambda_{\infty}^{\pm}(t) = \exp\left( \sum_{i=1}^{N_{\text{basis}}} \xi_i^{\pm} f_i(t) \right),  
$$

where $ \{f_i(t)\} $ are the basis functions, and $ \{\xi_i^{\pm}\} \in \mathbb{R} $ are parameters to be estimated.

The basis functions are constructed using cosine bump functions defined as:

$$
\hat{f}(x) =
\begin{cases}
    \frac{\cos\left(\frac{\pi x}{2}\right) + 1}{4}, & \text{if } |x| \leq 2, \\
    0, & \text{otherwise}.
\end{cases}
$$

These are then transformed as follows:

$$
f_i(t) = \hat{f}\left( \frac{t - T_{b,m} - (i - 2)w}{w} \right), 
$$
where $ i = 1, 2, \dots, N_{\text{basis}} $, $ w = \frac{l}{N_{\text{basis}} - 3} $, and $ l = T_{e,m} - T_{b,m} $. Here, $ N_{\text{basis}} $ represents the number of basis functions, which can be determined using a grid search. This construction provides the flexibility to reproduce the empirical shape of market activity given by an exponential rise toward gate closure followed by a sharp decline, while retaining smoothness and interpretability. While increasing $ N_{\text{basis}} $ improves the model’s flexibility, it also increases computational complexity by enlarging the parameter space, making the non-convex optimization problem more challenging.

We compare the spline-based approach with a piecewise constant approximation, evaluating both self-excitation-only models and those incorporating cross-excitation. Using tick-by-tick EPEX Spot data for hourly German market products, the spline approximation consistently outperforms the piecewise method in terms of goodness-of-fit metrics. Interarrival time analyses for compensated marginal and pooled processes further confirm the effectiveness of the spline approach for modeling buy/sell market order arrivals. Model selection using Akaike Information Criterion (AIC) and likelihood differences reinforces these results. Detailed results are provided in Appendix~\ref{sec:GoodnessofFitResults}.

\begin{figure}
    \centering
    \includegraphics[width=0.9\linewidth]{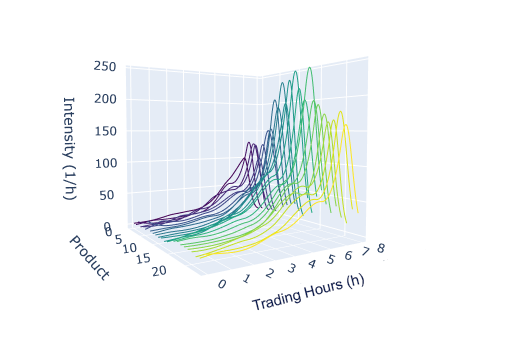}
    \caption{Calibrated baseline intensity of buy market orders ($1/h$) using the spline approximation method. The plot shows the median baseline intensity for each day and hourly product, using median calibrated parameters obtained via MLE.
    }
    \label{fig: CalibratedSpline}
\end{figure}

On average, the shape of the baseline function for each delivery hour closely aligns with the empirical observations. This level of accuracy would be difficult to achieve using a monotone increasing baseline function (e.g., exponential form) or the piecewise approximation methodology described earlier (see Figure~\ref{fig:PieceWiseSpline} in the Appendix).

Furthermore, Figure~\ref{fig: CalibratedSpline} highlights that the median calibrated spline functions capture the slight seasonality across trading products. Specifically, the intensities exhibit an increase from the early hourly products, peak around the mid-day products, and slightly decrease for the afternoon products. These patterns align with the empirical analysis in Section~\ref{sec:stylized-facts}, particularly the observations illustrated in Figure~\ref{fig:VarIntensity1} for the buy-side order book (LOB).

We focus on the calibration results obtained using the spline approximation model. Figure~\ref{fig:SplineApproximationComparison} presents the results for hourly products with delivery between 10:00 and 12:00. The number of basis functions used for the baseline intensity estimation is $N_{\text{basis}} = 10$.

\begin{figure*}[!htbp]
    \centering
    \includegraphics[width=0.9\linewidth]{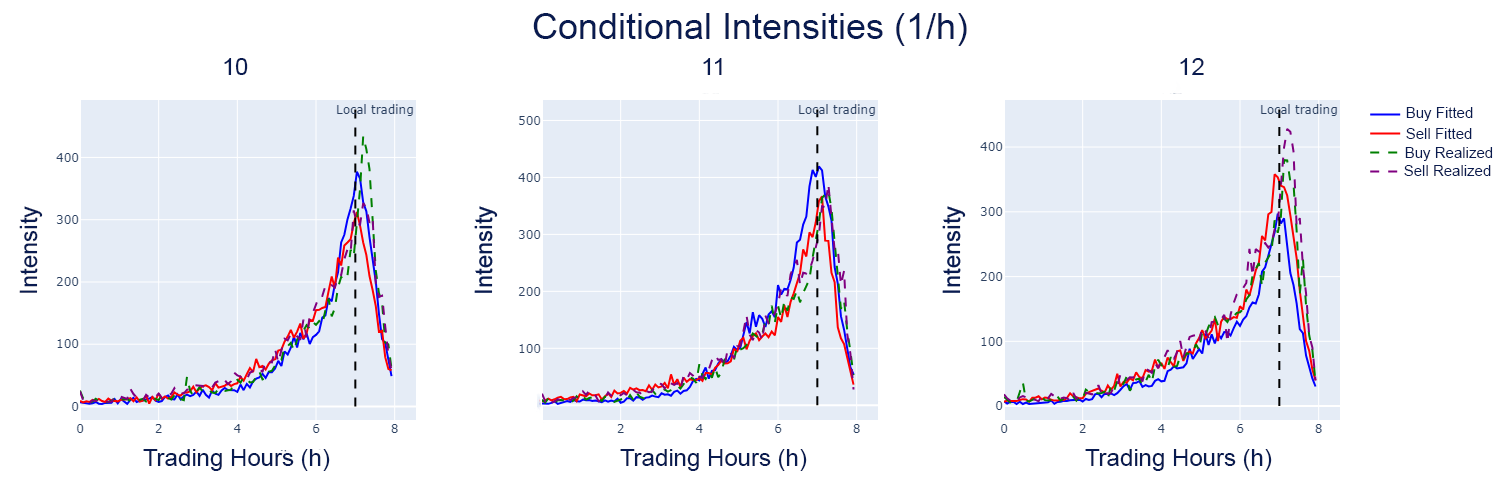}
    \caption{Empirical and fitted intensity for Buy/Sell Market order. We estimate the empirical intensity by counting the average number of events during $[t,t+ \delta t]$, where $N(t)$ the realized counting process for each side of the LOB. 
    }
    \label{fig:SplineApproximationComparison}
\end{figure*}

Next, we analyze the parameters of self-excitation exponential kernels using the spline approximation for the baseline intensity. Table~\ref{tab:combinedParams} presents the median estimated parameters for each delivery hour, separately for the Buy and Sell sides of the limit order book (LOB). The median is employed as a robust estimator to reduce the impact of outliers and achieve a balance between bias and variance.

On the buy side, the median excitation parameter $\alpha$ across the hourly trading products ranges from $0.116$ to $0.171\,\persec$, corresponding to approximately $418$–$616\,\perhour$ after conversion.  
This implies that following the arrival of a Buy market order, the conditional intensity increases by about $418$ to $616$ arrivals per hour, with a median around $473\,\perhour$.  
Similar values are observed on the Sell side, with only minor variations across products.  
The decay rate $\beta$ on the Buy side corresponds to half-lives $\ln(2)/\beta$ between $1.2$ and $2.2$ seconds, indicating that the self-excitation effect decays rapidly.  
On the Sell side, the corresponding half-lives range from $1.3$ to $2.1$ seconds.  
The branching ratio $\alpha/\beta$, which in the univariate exponential Hawkes model represents the expected number of additional arrivals triggered by one event, lies between $0.28$ and $0.36$ for most products.  
This satisfies the stability condition $\alpha/\beta < 1$ required for exponential Hawkes processes~\cite{Bacry2013}.  
Economically, this means that a single Buy market order leads on average to an additional $0.28$–$0.36$ Buy orders.  
Sell-side estimates again display similar ranges.  
Kolmogorov–Smirnov tests reveal no significant differences at the 5\% level between Buy and Sell sides in the distributions of excitation parameters and branching ratios.  
We also find no systematic differences across delivery hours in the Hawkes kernel parameters.  
By contrast, the calibrated baseline intensity obtained from the spline approximation exhibits pronounced variation across products (Figure~\ref{fig: CalibratedSpline}), in line with the intraday activity patterns reported in Figure~\ref{fig:VarIntensity1}.

\begin{table}[!htbp]
  \centering
   \resizebox{\linewidth}{!}{
  \begin{tabular}{l|rrrr|rrrr}
    \toprule
    {\textbf{Hourly product}} & \multicolumn{4}{c|}{\textbf{Buy Side}} & \multicolumn{4}{c}{\textbf{Sell Side}} \\
    \cmidrule(lr){2-5} \cmidrule(lr){6-9}
    & $\alpha$ ($\frac{1}{s}$) & $\beta$ ($\frac{1}{s}$) & $\frac{\alpha}{\beta}$ & $\frac{\ln(2)}{\beta}$ (sec.) & $\alpha$ ($\frac{1}{s}$) & $\beta$ ($\frac{1}{s}$) & $\frac{\alpha}{\beta}$ & $\frac{\ln(2)}{\beta}$ (sec.) \\
    \midrule
    1 & 0.123 $\pm$ 0.066 & 0.345 $\pm$ 0.256 & 0.357 & 2.009 & 0.150 $\pm$ 0.090 & 0.430 $\pm$ 0.326 & 0.349 & 1.612 \\
    2 & 0.117 $\pm$ 0.057 & 0.329 $\pm$ 0.214 & 0.356 & 2.107 & 0.136 $\pm$ 0.067 & 0.412 $\pm$ 0.252 & 0.330 & 1.682 \\
    3 & 0.119 $\pm$ 0.056 & 0.371 $\pm$ 0.273 & 0.321 & 1.868 & 0.130 $\pm$ 0.064 & 0.407 $\pm$ 0.276 & 0.319 & 1.703 \\
    4 & 0.118 $\pm$ 0.057 & 0.352 $\pm$ 0.216 & 0.335 & 1.969 & 0.117 $\pm$ 0.064 & 0.378 $\pm$ 0.290 & 0.310 & 1.834 \\
    5 & 0.124 $\pm$ 0.057 & 0.363 $\pm$ 0.201 & 0.342 & 1.909 & 0.134 $\pm$ 0.075 & 0.421 $\pm$ 0.337 & 0.318 & 1.646 \\
    6 & 0.125 $\pm$ 0.063 & 0.375 $\pm$ 0.267 & 0.333 & 1.848 & 0.133 $\pm$ 0.070 & 0.438 $\pm$ 0.331 & 0.304 & 1.583 \\
    7 & 0.171 $\pm$ 0.087 & 0.529 $\pm$ 0.370 & 0.323 & 1.310 & 0.160 $\pm$ 0.083 & 0.520 $\pm$ 0.391 & 0.308 & 1.333 \\
    8 & 0.160 $\pm$ 0.093 & 0.494 $\pm$ 0.391 & 0.324 & 1.403 & 0.141 $\pm$ 0.074 & 0.417 $\pm$ 0.295 & 0.338 & 1.662 \\
    9 & 0.168 $\pm$ 0.088 & 0.584 $\pm$ 0.490 & 0.288 & 1.187 & 0.144 $\pm$ 0.074 & 0.444 $\pm$ 0.345 & 0.324 & 1.561 \\
    10 & 0.153 $\pm$ 0.071 & 0.455 $\pm$ 0.298 & 0.336 & 1.523 & 0.144 $\pm$ 0.065 & 0.445 $\pm$ 0.294 & 0.324 & 1.558 \\
    11 & 0.137 $\pm$ 0.065 & 0.395 $\pm$ 0.241 & 0.347 & 1.755 & 0.149 $\pm$ 0.074 & 0.418 $\pm$ 0.259 & 0.356 & 1.658 \\
    12 & 0.116 $\pm$ 0.053 & 0.317 $\pm$ 0.171 & 0.366 & 2.187 & 0.136 $\pm$ 0.062 & 0.379 $\pm$ 0.231 & 0.359 & 1.829 \\
    13 & 0.116 $\pm$ 0.051 & 0.315 $\pm$ 0.183 & 0.368 & 2.200 & 0.132 $\pm$ 0.062 & 0.385 $\pm$ 0.223 & 0.343 & 1.800 \\
    14 & 0.121 $\pm$ 0.055 & 0.337 $\pm$ 0.197 & 0.359 & 2.057 & 0.122 $\pm$ 0.052 & 0.365 $\pm$ 0.209 & 0.334 & 1.899 \\
    15 & 0.116 $\pm$ 0.045 & 0.318 $\pm$ 0.157 & 0.365 & 2.180 & 0.119 $\pm$ 0.046 & 0.341 $\pm$ 0.161 & 0.349 & 2.033 \\
    16 & 0.128 $\pm$ 0.057 & 0.381 $\pm$ 0.232 & 0.336 & 1.819 & 0.131 $\pm$ 0.054 & 0.373 $\pm$ 0.183 & 0.351 & 1.858 \\
    17 & 0.123 $\pm$ 0.051 & 0.356 $\pm$ 0.193 & 0.346 & 1.947 & 0.128 $\pm$ 0.057 & 0.383 $\pm$ 0.224 & 0.334 & 1.810 \\
    18 & 0.122 $\pm$ 0.054 & 0.381 $\pm$ 0.267 & 0.320 & 1.819 & 0.121 $\pm$ 0.058 & 0.339 $\pm$ 0.214 & 0.357 & 2.045 \\
    19 & 0.133 $\pm$ 0.065 & 0.394 $\pm$ 0.260 & 0.338 & 1.759 & 0.124 $\pm$ 0.056 & 0.345 $\pm$ 0.189 & 0.359 & 2.009 \\
    20 & 0.140 $\pm$ 0.072 & 0.420 $\pm$ 0.303 & 0.333 & 1.650 & 0.112 $\pm$ 0.046 & 0.323 $\pm$ 0.184 & 0.347 & 2.146 \\
    21 & 0.134 $\pm$ 0.075 & 0.427 $\pm$ 0.353 & 0.314 & 1.623 & 0.120 $\pm$ 0.059 & 0.368 $\pm$ 0.269 & 0.326 & 1.884 \\
    22 & 0.138 $\pm$ 0.073 & 0.406 $\pm$ 0.289 & 0.340 & 1.707 & 0.126 $\pm$ 0.056 & 0.403 $\pm$ 0.267 & 0.313 & 1.720 \\
    23 & 0.143 $\pm$ 0.089 & 0.462 $\pm$ 0.393 & 0.310 & 1.500 & 0.132 $\pm$ 0.062 & 0.401 $\pm$ 0.259 & 0.329 & 1.729 \\
    24 & 0.121 $\pm$ 0.062 & 0.358 $\pm$ 0.240 & 0.338 & 1.936 & 0.124 $\pm$ 0.058 & 0.402 $\pm$ 0.257 & 0.308 & 1.724 \\
    \bottomrule
  \end{tabular}}
  \caption{
Medians ($\pm$ standard deviations) of the estimated parameters of the exponential Hawkes kernel 
$t \mapsto \alpha e^{-\beta t}$ for Buy and Sell market orders, conditional on delivery hour. 
Reported are $\alpha$ and $\beta$ (in $\persec$), the branching ratio $\alpha/\beta$, and the half-life $\ln(2)/\beta$ (in seconds).  
For intuition, $\alpha=0.12\,\persec$ corresponds to an instantaneous increase of about $+432\,\perhour$ in the conditional intensity after one buy arrival, 
while $\beta=0.40\,\persec$ implies a half-life of $\ln 2/0.40 \approx 1.73$ seconds.}
  \label{tab:combinedParams}
\end{table}

We evaluate the spline model by comparing its inter-event time distribution directly to the realized distribution. This is achieved by simulating a process of the same length as the empirical process using the thinning algorithm, with the estimated parameters provided in Table~\ref{tab:combinedParams}. The simulated point process is then compared to the empirical one using inter-event time distributions, visualized in a QQ-plot. Additionally, we perform a two-sided Kolmogorov-Smirnov (KS) test to verify whether the simulated and empirical inter-event time distributions are statistically indistinguishable. For this test, we simulate event times for Buy/Sell market orders for each day and each hourly product.

Table~\ref{tab:ks-test-results} summarizes the mean p-values obtained from the KS test and the percentage of cases where the null hypothesis of equal distributions is not rejected at the significance level of 5\%. The results indicate that the null hypothesis of equal distributions is not rejected at $\alpha = 0.05$ for each hourly product, demonstrating a good overall fit of the model to the empirical data.

Figures~\ref{fig:qq-plot1} and~\ref{fig:qq-plot2} present QQ-plots of inter-event times for all events on the bid side of the order book, focusing on two selected maturities (19h and 21h). For short inter-event times ($\Delta\tau \leq 300s$), the model provides a good fit. However, the fit deteriorates for longer inter-event times, particularly for the shorter maturity (19h). This discrepancy may be attributed to lower trading activity at the start of the trading session. In some sessions, this inactivity significantly extends the tail of the inter-event time distribution, making it slightly left-skewed. This behavior is not fully captured by the calibrated model, potentially limiting its ability to accurately simulate the distribution of long inter-event times.



\begin{figure*}[!htbp]
    \centering
    \begin{subfigure}[b]{0.45\textwidth}
    \centering
    \includegraphics[width=\linewidth]{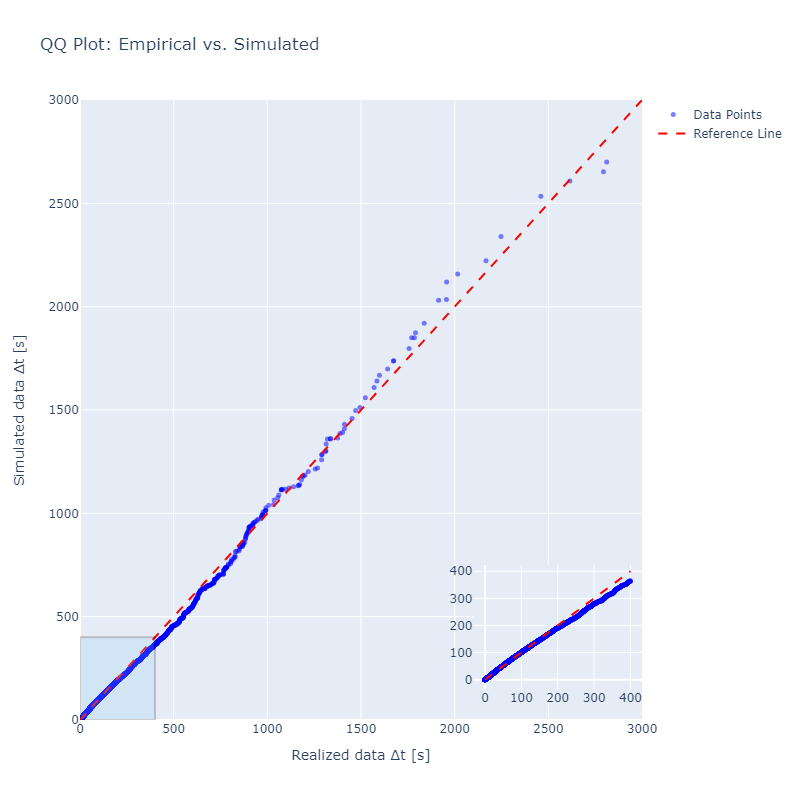}
    \caption{QQ-plot for Buy MOs (Simulated vs Realized) for maturity 21h}
    \label{fig:qq-plot1}
    \end{subfigure}
\hfill
    \begin{subfigure}[b]{0.45\textwidth}
    \centering
    \includegraphics[width=\linewidth]{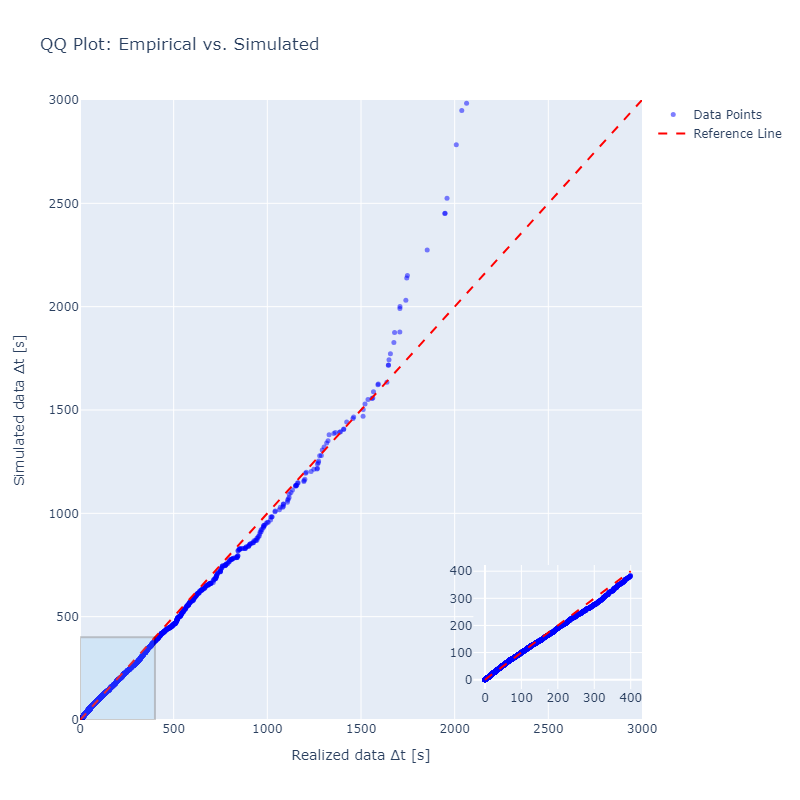}
    \caption{QQ-plot for Buy MOs (Simulated vs Realized) for maturity 19h}
    \label{fig:qq-plot2}
    \end{subfigure}
\end{figure*}\noindent

\begin{table}[!htbp]
    \centering
    \scriptsize
    \setlength{\tabcolsep}{4pt}
    \renewcommand{\arraystretch}{1.1}
    \begin{tabular}{>{\centering\arraybackslash}p{1cm}|>{\centering\arraybackslash}p{3cm}|>{\centering\arraybackslash}p{3cm}}
        \toprule
        \textbf{Hour} & \textbf{Mean ± Std P-Value} & \textbf{Percentage $\geq$ 0.05} \\
        \midrule
        0  & 0.22 $\pm$ 0.08 & 69.37\% \\ 
        1  & 0.24 $\pm$ 0.07 & 70.27\% \\ 
        2  & 0.31 $\pm$ 0.06 & 73.08\% \\ 
        3  & 0.29 $\pm$ 0.05 & 79.83\% \\ 
        4  & 0.28 $\pm$ 0.05 & 79.67\% \\ 
        5  & 0.27 $\pm$ 0.07 & 80.17\% \\ 
        6  & 0.23 $\pm$ 0.09 & 76.52\% \\ 
        7  & 0.23 $\pm$ 0.09 & 77.48\% \\ 
        8  & 0.24 $\pm$ 0.08 & 69.53\% \\ 
        9  & 0.21 $\pm$ 0.08 & 66.39\% \\ 
        10 & 0.22 $\pm$ 0.08 & 69.42\% \\ 
        11 & 0.21 $\pm$ 0.09 & 66.40\% \\ 
        12 & 0.20 $\pm$ 0.10 & 61.06\% \\ 
        13 & 0.21 $\pm$ 0.09 & 68.42\% \\ 
        14 & 0.25 $\pm$ 0.08 & 68.64\% \\ 
        15 & 0.25 $\pm$ 0.07 & 76.19\% \\ 
        16 & 0.26 $\pm$ 0.07 & 76.92\% \\ 
        17 & 0.23 $\pm$ 0.08 & 73.28\% \\ 
        18 & 0.22 $\pm$ 0.08 & 70.31\% \\ 
        19 & 0.22 $\pm$ 0.08 & 71.97\% \\ 
        20 & 0.21 $\pm$ 0.09 & 67.97\% \\ 
        21 & 0.20 $\pm$ 0.10 & 72.36\% \\ 
        22 & 0.23 $\pm$ 0.09 & 71.77\% \\
        23 & 0.27 $\pm$ 0.07 & 77.60\% \\ 
        \bottomrule
    \end{tabular}
    \caption{Kolmogorov-Smirnov test p-values between the simulated interarrival times and the realized ones, along with the percentage of p-values $\geq 0.05$ for each delivery hour.}
    \label{tab:ks-test-results}
\end{table}

So far we have considered only the case where the cross-covariances are equal to zero and whilst the counting process $\left(N_t^+,N_t^-\right)$ describing the arrival of Buy/Sell market orders respectively may be self-exciting, they are independent and thus not mutually exciting. To introduce the excitation between the two types of orders we consider the parameterization of the conditional intensity function given by the full impact matrix kernel \eqref{eq:impactmatrix}. Following the goodness-of-fit analysis we have conducted (See Appendix \ref{sec:GoodnessofFitResults}) in the univariate case we can see that overall the fitted model with the spline approximation of the baseline intensity provides a better fit to the data. Hence as a baseline intensity, we will consider the one given by the spline approximation. We present the calibrated parameters for the bivariate case in the Appendix below \ref{tab:combined-table}. Our goal is to test the performance of the two models to cost savings compared to the benchmark models; this will be considered in the following section.

In order to calculate the resilience speed of the transient part of the price we will use the approximation derived in~\cite{Chen2019}. The authors derived the resilience speed parameter for a TWAP type of schedule, assuming that the market orderflow follows a self-exciting Hawkes process with exponential kernel $t \mapsto \alpha \E^{-\beta t}$. Under this setting they showed that a good approximation of the resilience speed is $\rho \approx \frac{1 - \alpha/\beta}{T/2}$, where $[0,T]$ the time period where the trader want to liquidate his position. The approximation shows how the resilience parameter links to the self-exciting dynamics of market order flow. As the authors mention this resilience speed compromises only an approximation since different liquidating strategies would in principle lead to different resilience speed parameters $\rho$. To this end we follow this approach and we leave out the calibration of a propagator type of model for future research.

\subsection{Execution Costs}
\label{ExecutionCostEstimates}
We will now present a summary of the cost estimates for the different liquidation strategies we are going to propose in the following section. In order to perform such a transaction cost analysis we should first find some appropriate cost estimates. Since we do not have any endogenous data, we proceed to estimate various market
impact factors from public data disseminated by the exchange and the LOB \cite{EPEXSpotSE}. Our calibration methodology is similar in nature to the analysis presented in \cite{cartea2015algorithmic,glas2020intraday} where an approximation of the half-spread and instantaneous trading costs is studied using LOB market data. 

\subsubsection{Instantaneous impact estimation from stochastic order book}
We consider an LOB model with a total depth of $K$ price levels. The bid- and ask-side liquidity are denoted by $L_i^{b}$ and $L_i^{a}$, respectively, representing the amount of standing liquidity at the $i$-th price level away from the best bid or ask. Rather than imposing an equally spaced grid of price level, which may introduce gaps of zero liquidity ($L_i^{b} = 0$ for some levels $i$), we build an LOB representation restricted to positive standing liquidity at each price level, ensuring a non-empty book for estimation. 

Let $M_0 := L_0^b$, with $M_0 > 0$, and define $\Delta P_0 := 0$. Let 
\[
K_0 := \#\bigl\{ i \in \{0,\dots,K\} : L_i^b > 0 \bigr\}.
\]
We then construct a sequence of price offsets $\Delta P_i$ and standing liquidity $M_i$ recursively for  $i = 1,\dots,K_0$:
\[
\Delta P_i := \inf\{j > \Delta P_{i-1} : L_j^b > 0\}, 
\quad M_i := L_{\Delta P_i}^b.
\]
Hence, the book offset $\Delta P_i$ denotes the $i$-th non-empty price level, and $M_i$ is the corresponding liquidity. The cumulative liquidity up to price level $\Delta P_i$ is
\[
m_i := \sum_{j=0}^i M_j.
\]
Thus, the LOB state at any given time can be fully described by the random vector $\{(M_i, \Delta P_i)\}_{i=0,\dots,K_0}$. Compared with simple snapshot averages, a non-parametric estimation of the joint distribution $\{(M_i, \Delta P_i)\}$ can capture more nuanced stochastic behavior, including asymmetries and heavier tails, leading to better estimates of transaction costs.

When building a deterministic grid of price offsets, we replace random offsets with their empirical means, still denoted by $\Delta P_i$. In this framework, the distribution of $M_i$ alone suffices to compute the transaction cost function
\begin{equation}
\label{eq:transaction_cost}
\begin{aligned}
C(\mathcal{U}) 
:= \frac{1}{\mathcal{U}}\, \mathbb{E}\bigg[
&\sum_{j=1}^K \mathbf{1}_{\{m_{j-1}<\mathcal{U} \leq m_j\}} \Big(
\sum_{i=1}^j M_{i-1}\,\Delta P_{i-1} \\
&+ (\mathcal{U} - m_j)\,\Delta P_j \Big)
\bigg],
\end{aligned}
\end{equation}

where $\mathcal{U}$ is the size of a large market order. In general, estimating such a high-dimensional joint distribution is prone to the curse of dimensionality, requiring a balance between bias and variance. To alleviate complexity, a modified cost function can be used to reduce dimensionality while still offering a meaningful estimate of the instantaneous impact. 

We define an approximate cost function for instantaneous market impact as follows:
\begin{align}\label{eq:marketimpact}
    \hat{C}(\mathcal{U})
    &:=\!\frac{1}{\mathcal{U}} \sum_{j=1}^K 
    \mathbb{E}\Bigl[
        M_{j-1}\Delta P_{j-1} \!+\! (\mathcal{U} -\! m_j)\Delta P_j 
        \Big|V_{i,j}
\Bigr]\nonumber\\
    &\qquad \times \mathbb{P}\big(m_j < \mathcal{U} \le m_{j+1}\big),
\end{align}

where $V_{i,j}=\{\omega\in\Omega\colon m_j < \mathcal{U} \le m_{j+1}\}$ and $\nu$ is the market order size, $\Delta P_j$ are the price offsets, and $M_j$ denotes the liquidity at offset $\Delta P_j$. The index $m_j$ is the aggregate volume up to level $j$.

Although \eqref{eq:marketimpact} is simpler to estimate than the exact cost function, it introduces a bias. In particular, once an order consumes liquidity up to level~$i$, the effective price is approximated by $\Delta P_i$ instead of a volume-weighted average 
$\sum_{j=0}^i M_j \Delta P_j(\sum_{j=0}^i M_j)^{-1}$.
However, this approximation significantly reduces the computational burden and estimation variance, since it only requires estimating the joint distribution at two levels $(m_i, m_{i+1})$ each time.

To approximate the joint density of the random variables $x = m_i$ and $y = m_{i+1}$, we use a bivariate kernel density estimator. Suppose we have $\bigl\{(x_t, y_t)\bigr\}_{t=1}^n$, where $n$ denotes the number of available data points, drawn from the distribution of $(m_i, m_{i+1})$. The estimator is:
\begin{equation}
    \hat{f}_{x,y}(x,y)
    = \frac{1}{n} \sum_{t=1}^n K_H\bigl(x - x_t,\, y - y_t\bigr),
\end{equation}
where $K_H$ is a bivariate kernel with bandwidth matrix $H$. In our analysis, we use a Gaussian kernel, choosing $H$ via standard bandwidth selection techniques.

To estimate the instantaneous market impact, we shall simulate the order book in its stationary state, followed by the submission of orders of size $\mathcal{U}$. For each order, we compute the per-unit cost (price change per MWh) $\hat{C}(\mathcal{U})$ from~\eqref{eq:marketimpact}. Next, we fit a linear relationship:
$$
    \hat{C}(\mathcal{U}) = \eta\,\mathcal{U} + \varepsilon,
    \quad \varepsilon \sim \mathcal{N}\bigl(0, \sigma^2\bigr),
$$
where $\hat{C}(\mathcal{U})$ is the price change per MWh after executing an order of size $\mathcal{U}$ MWh. We interpret $\eta$ (price/MWh$^2$) as the instantaneous market impact factor. 

In practice, we calibrate the distribution of liquidity using kernel density estimation on quote snapshots $\{t_i\}_{i=1}^n$, each taken within a one-hour ($T=1\text{h}$) bin of the limit order book. We use data from the last eight tradable hours for each hourly product, yielding a total of $N_{\mathrm{days}} \times N_{\mathrm{products}} \times 8$ parameters for the instantaneous scaling factor~$\hat{\eta}$. We estimate~$\hat{\eta}$ via ordinary least squares, focusing primarily on the sell side of the LOB, as our interest lies in liquidating a long position. 

Figure~\ref{fig:TempImpactCost} shows the average piecewise-interpolated function describing the linear execution cost coefficient. The median calibrated $\hat{\eta}(t)$ for each hourly contract clearly varies over time; specifically, $\hat{\eta}(t)$ decreases after the trading session begins (measured relative to the last eight trading hours) and slightly increases in the final hour (local trading regime). This aligns with the notion that liquidity diminishes near gate closure, causing higher transaction costs~\cite{Bal22, glas2020intraday, KZ20}.

\begin{figure}
    \centering
    \includegraphics[width=\linewidth]{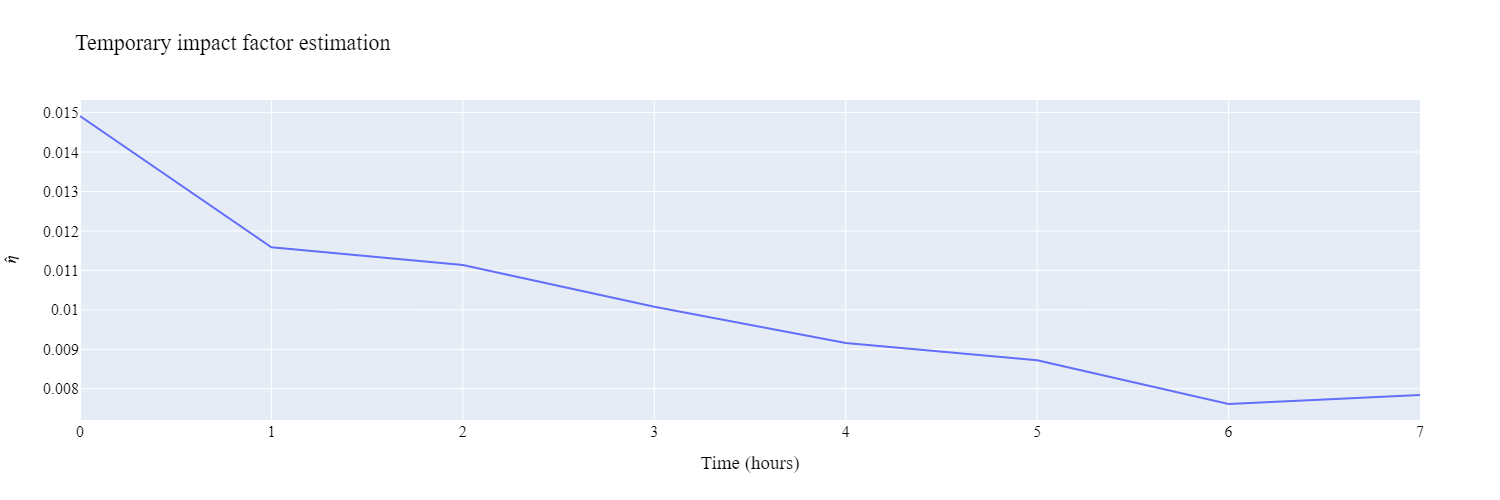}
    \caption{Estimated \emph{temporary impact} coefficients with respect to gate closure for the sell side of the LOB. 
    These impact coefficients are obtained by simulating a large market order that ``walks'' through the limit order book, assuming a linear relationship 
    \(\hat{C}(\mathcal{U}) = \eta\,\mathcal{U} + \varepsilon\), where \(\varepsilon \sim \mathcal{N}(0,\sigma_\varepsilon^2)\) and \(\hat{C}(\mathcal{U})\) denotes the price change per MWh for an order of size \(\mathcal{U}\).}
    \label{fig:TempImpactCost}
\end{figure}

\subsubsection{Bid-Ask Spread Transaction Costs}

Following \cite{Bal22,glas2020intraday,KZ20}, the bid-ask spread (BAS) in intraday energy markets exhibits a clear seasonal pattern. In particular, the average BAS decreases significantly at the outset of the trading session and remains nearly constant for the final eight hours of trading. One hour before maturity, when trading is restricted to the local market, the spread widens sharply. This behavior mirrors the pattern observed for the half-spread, which closely tracks the shape of the temporary impact coefficient $\hat{\eta}(t)$. Both observations suggest that liquidity is notably more limited at market open and after market decoupling compared to just before decoupling \cite{glas2020intraday,KZ20}.

Consequently, the \emph{cost of immediacy}, i.e. submitting a market order instead of a limit order, tends to be substantially higher at the start of trading and during the final trading hour. In contrast, during the intermediate trading period, these costs remain relatively lower (see Figure~\ref{fig:TempImpactCost}). 

To backtest the strategies using half-spread execution costs, we compute a time-weighted bid-ask spread (BAS) over discrete intervals of 6 seconds. Consider an interval $I_i := (T_{i-1}, T_i] \subset [0, T]$, and suppose the BAS changes $N_i$ times within this interval at timestamps
\begin{align*}
T_{i-1} < t_1^{(i)} < t_2^{(i)} < \dots < t_{N_i}^{(i)} \leq T_i.    
\end{align*}

Let $\text{BAS}_j$ denote the bid-ask spread on the sub-interval 
$(t_j^{(i)}, t_{j+1}^{(i)})$, for $j = 0, \dots, N_i$, where we set 
$t_0^{(i)} := T_{i-1}$ and $t_{N_i+1}^{(i)} := T_i$. 
The time-weighted bid-ask spread on the interval $I_i$ is then defined as
\begin{equation}
\label{eq:timeWeightedBAS}
\overline{BAS}_i := \frac{1}{\Delta T_i} \sum_{j=0}^{N_i} 
BAS_j \cdot \left( \min(t_{j+1}^{(i)}, T_i) - \max(t_j^{(i)}, T_{i-1}) \right).
\end{equation}

\section{Optimal strategy backtesting and transaction cost analysis}\label{sec:optimal-strateg-backtesting}
In this section, we apply our model, along with the associated closed-form solution of the optimal liquidation strategy to a transaction cost analysis framework.

\subsection{Scaling and discretization of the optimal strategy}

We start by describing the use of the optimal strategy described in Equation \eqref{OptimalStrategyFull1}, which reads

\begin{align}
\label{OptimalStrategyFull2}
&X_t^* = -\frac{1+\rho h}{\,1-\epsilon\,}\,D_t^* \\[0.4em]
&+ \frac{m_1(2+\rho h)}{2(1-\epsilon)\rho}
\Biggl(\!1 +\! \frac{\rho h}{2+\rho h}
    \Bigl[\zeta(\eta h) \!+ \mu \rho h \,\omega(\eta h)\Bigr]\Biggr)\kappa_t.\nonumber
\end{align}

We recall that $\kappa_t = \lambda_t^+ - \lambda_t^-$, is the vector of intensity imbalances and are modeled with one of the aforementioned models using the seasonal versions of the baseline intensity, which has been estimated non-parametrically as described in Section \ref{sec:ModelCalibration}. We are then going to consider the univariate and bivariate models separately for the backtesting and transaction cost analysis study using the calibrated parameters obtained (See Table \ref{tab:combinedParams} for the univariate case and Table \ref{tab:combined-table} for the bivariate case). In order to use the optimal liquidating strategy we consider a subset $\Theta$ of the time interval  $\Theta \subset[T_{b,m},T_{e,m}]\subset [0,T]$ possible made of stopping times and trade for each $t \in \Theta$ the quantity 

\begin{align}
\label{OptimalStrategyProxy}
\xi_{t,T}^s & = - \frac{[1+\rho h]qs D_t + X_t}{2+ \rho h}\\
& + \frac{m_1}{2\rho}\left( 1+ \frac{\rho h}{2+ \rho h}\times \left[\zeta h \eta) + \mu \rho \omega( h\eta)\right]\cdot s \kappa_t\right),\nonumber
\end{align}
so that equation \eqref{OptimalStrategyFull2} holds for $t+\Delta t$ if $s=1$. In order to calculate the resilience part of the execution price we consider the simplified expression
\begin{equation}
    D_t^* = \sum_{\tau \leq t} \Delta N_{\tau}\left[ G(t - \tau) - G(\infty)\right].
\end{equation}

The propagator kernel is chosen as in the OW framework $G(u) = (1-\mu)\exp(-\rho u) + \mu$ and $\Delta N_{\tau}$ the jumps of the order-flow. As we have already described we have a positive (negative) unit jump when we have an arrival of a buy (sell) market order. We also introduce the scaling factor $s\in [0,1]$ that multiplies both $\kappa_t$ and $D_t$ and is a hyper-parameter that tunes the leveraging effect of the strategy, describing the deviation of the whole strategy from the standard liquidation scheme in~\cite{Obizhaeva2005}. We can observe that in the case $s\!=\!0$ we derive the static strategy, which is nothing else by the block trades that have been derived in the OW strategy, and in the case $s\!=\!1$ we derive the strategy given by optimal execution strategy we derived under the Marked Hawkes process which drives the order flow and presented in Theorem \ref{thm:optimalstrategy}.

\subsection{Benchmark strategies}

In optimal execution, instant order book execution involves trading the full volume at once in a single block trade, represented mathematically as $dX_t = - X_0 \delta_0(dt)$. This approach, equivalent to a single click on the bid or ask side in the EPEX system, sacrifices cost efficiency for instant execution. However, due to higher market impact and execution costs, especially when trading far from maturity, it is excluded from our backtesting methodology. A very common strategy used in the context of optimal execution is the so-called time-weighted average price or TWAP. The strategy $X^{TWAP}$ spreads the execution evenly over the time horizon $t\in [0,T]$. Specifically, the TWAP liquidation strategy is given by  

\begin{equation}
    X_t^{\text{TWAP}} = \frac{X_0(T-t)}{T}, \ \  t\leq T,
\end{equation}\noindent
and at maturity $X_T^{TWAP} = 0$. So that the agent's trading rate is constant (constant velocity) and $\xi_t^{TWAP} \equiv \frac{X_0}{T}$ for all $t \in [0,T]$. It has been shown in the seminal work of Almgren and Chriss \cite{almgren2000optimal} that under the martingale assumption on the execution price dynamics, linear permanent impact,  and excluding risk-aversion such a strategy is optimal in terms of minimizing the expected quadratic execution costs.

Since we are considering the framework where apart from the permanent and the temporary market impact component, we also have a transient component in the impact of the trading rate, we will also consider as a strategy the one derived by Obizhaeva and Wang in \cite{Obizhaeva2005}.

\begin{equation}
    dX_t^{OW} = - \frac{X_0}{2+ \rho T}\left[\delta_0(dt) + \rho + \delta_T(dt) \right],
\end{equation}
where $\delta$ is simply the Dirac function. Or similarly one can see the position of the liquidation strategy as 

\begin{equation}
X_t \rightarrow x_0 - \frac{x_0}{2+\rho T}\left[ \delta_0(t) + \rho t + \delta_T(t)\right].
\end{equation}
So essentially the strategy starts and ends with two block trades and then liquidates the position on a constant rate as in the case of the TWAP strategy. The only difference is on the resilience speed $\rho$ of the transient part of the price. In the particular case where $\rho  \to \infty$ meaning essentially that the transient market impact part of the strategy diminishes then the optimal strategy boils down to a TWAP. 

One of the most commonly used benchmark strategies in algorithmic trading is the Volume-Weighted Average Price (VWAP). It is straightforward to compute \emph{ex post}, making it a popular performance metric: if execution prices remain close to the VWAP, then the trader has effectively matched the volume-weighted average price over the execution window. However, implementing a VWAP strategy in real time is challenging because the intraday distribution of total traded volume is unknown in advance. Although volume in intraday markets often increases toward gate closure \cite{KZ20}, substantial variability persists both intraday and across different days.

Formally, the VWAP between times $T_1$ and $T_2$ is given by
\[
    \mathrm{VWAP}(T_1,T_2)
    = \frac{\int_{T_1}^{T_2} S_t \, dV_t}{\int_{T_1}^{T_2} dV_t},
\]
where $S_t$ is the mid-price at time $t$ and $V_t$ the total traded volume up to time $t$. The VWAP emphasizes trades that occur at times of higher volume, arguably capturing a more ``representative'' daily price than the Time-Weighted Average Price (TWAP), which does not weight trades by volume.

Targeting a VWAP execution schedule is difficult due to the random nature of traded volume throughout the day. One practical approach is to submit orders at a rate proportional to the ongoing market order flow, approximating a fraction of the total trading activity \cite{CJ16}. Because we model trading intensity, a VWAP-like strategy can be formulated by partitioning the trading horizon into minute buckets and distributing trades in proportion to the historical (or forecast) fraction of volume within each bucket.

Concretely, let \( \hat{v}_k \) represent the empirically observed volume in bucket \( k \), and let \( N \) denote the total number of buckets. We define re-scaling factors
\[
F_k
:= \frac{\hat{v}_k}{\frac{1}{N}\,\sum_{\ell=1}^N \hat{v}_\ell},
\quad
\text{ensuring}
\;\;\prod_{k=1}^N F_k = 1.
\]
Hence, the cumulative quantity \(X_0\) to be liquidated is partitioned into shares
\[
   \xi_k
   := \left( \frac{X_0}{N} \right) F_k,
\quad
X_k
= (N - k)
\left(
   \frac{X_0}{N} F_k
\right),
\]
where \( X_k \) represents the remaining inventory after $k$ buckets. This approach mimics the proportion of volume traded in each bucket, thus approximating a real-time VWAP schedule. In principle, more advanced forecasting methods for intraday volume could improve performance by adapting to evolving market conditions.

Figure~\ref{fig:ResidualPositionsCombined} shows an example of the resulting trading trajectory compared with TWAP and the Hawkes-derived optimal strategy, assuming $X_0 = 250\,\mathrm{MWh}$ for liquidation across four representative hourly contracts (15h-21h). The paths illustrate how the VWAP strategy seeks to align trading with the estimated intraday volume distribution, while TWAP divides the trading horizon evenly, and the Hawkes-based method exploits predicted intensity dynamics for potentially lower costs.

Recall that the actual traded volume $v_k$ is stochastic and not known \emph{a priori}, so we use the empirical values $\hat{v}_k$ derived from the LOB data. Much like equity markets, where trading activity often follows a U-shape with peaks near the open and close auctions, intraday energy markets exhibit a diurnal pattern. Consequently, we estimate the historical traded volume for each product in one-minute intervals (yielding 480 intervals over the last 8 hours of trading). 

To ensure a sufficient proportion of trading volume in each bucket, we require that 
$\displaystyle \prod_{k=1}^N F_k = 1$. Of course, instead of using a fixed historical mean, more advanced trading-volume forecasting methods could improve performance by incorporating an opportunistic component into the strategy.

\begin{figure*}[!htbp]
    \centering
    \begin{subfigure}[b]{0.45\textwidth}
        \centering
        \includegraphics[width=\textwidth]{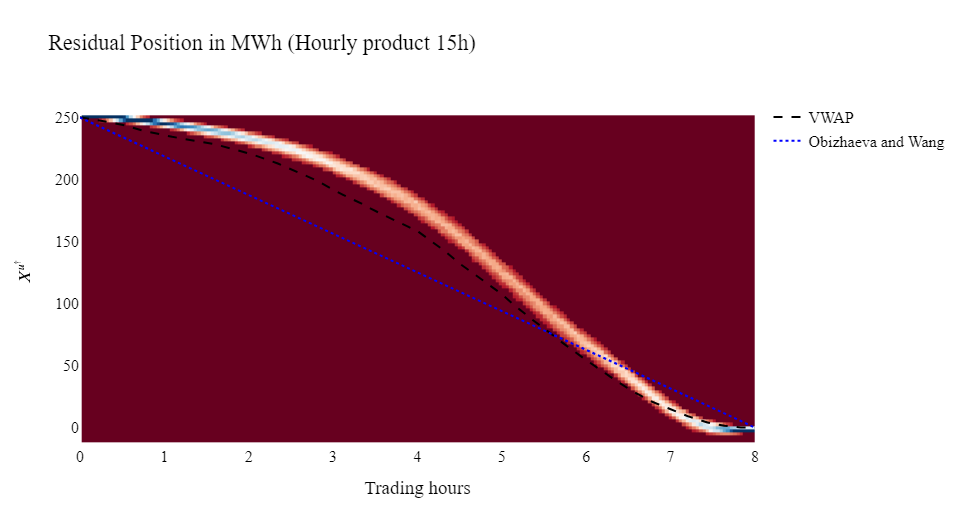}
        \caption{Residual position for the 15h product.}
        \label{fig:residualpos1}
    \end{subfigure}
    \hfill
    \begin{subfigure}[b]{0.45\textwidth}
        \centering
        \includegraphics[width=\textwidth]{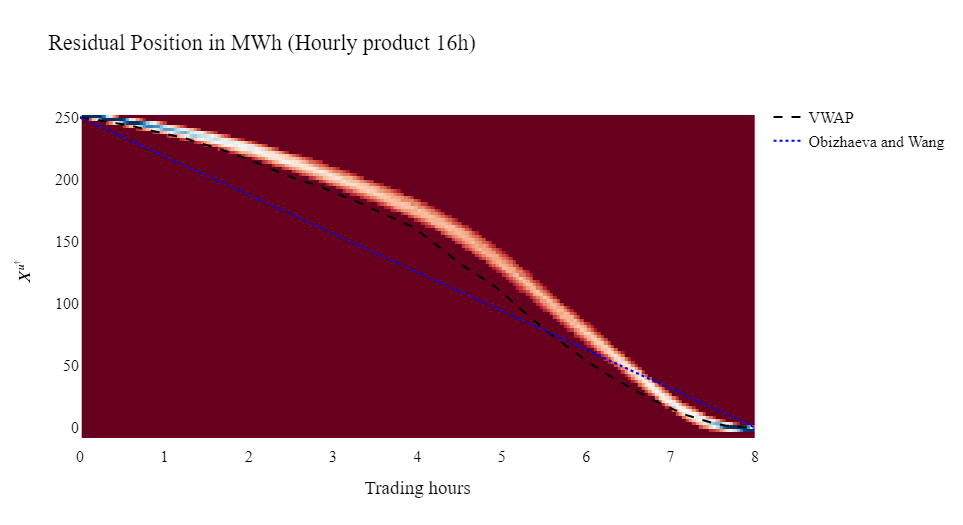}
        \caption{Residual position for the 16h product.}
        \label{fig:residualpos2}
    \end{subfigure}
    
    \vspace{1em} 
    
    \begin{subfigure}[b]{0.45\textwidth}
        \centering
        \includegraphics[width=\textwidth]{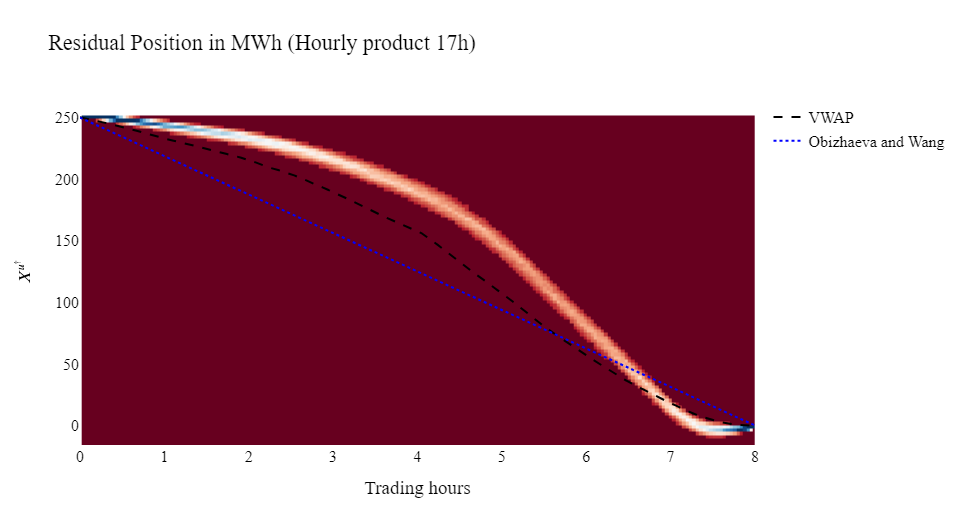}
        \caption{Residual position for the 17h product.}
        \label{fig:residualpos3}
    \end{subfigure}
    \hfill
    \begin{subfigure}[b]{0.45\textwidth}
        \centering
        \includegraphics[width=\textwidth]{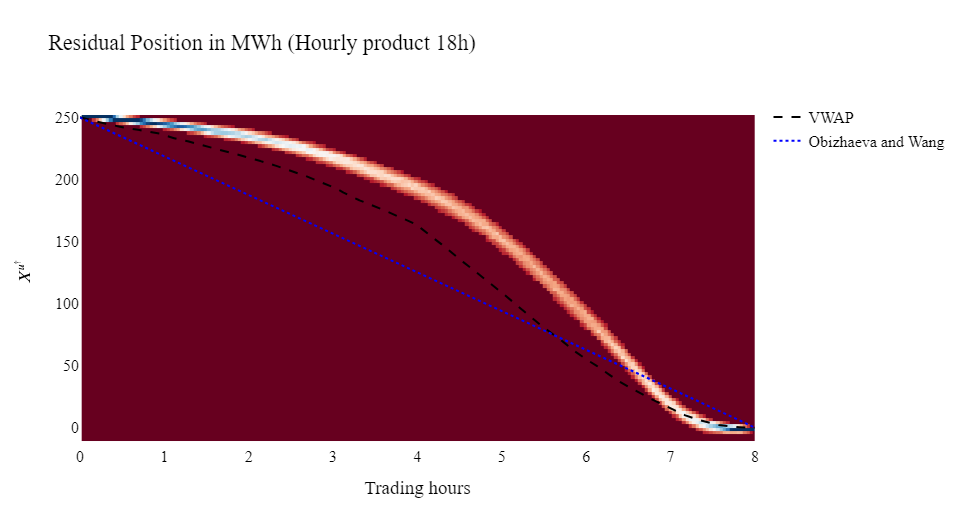}
        \caption{Residual position for the 18h product.}
        \label{fig:residualpos4}
    \end{subfigure}
        \vfill
    \begin{subfigure}[b]{0.45\textwidth}
        \centering
        \includegraphics[width=\textwidth]{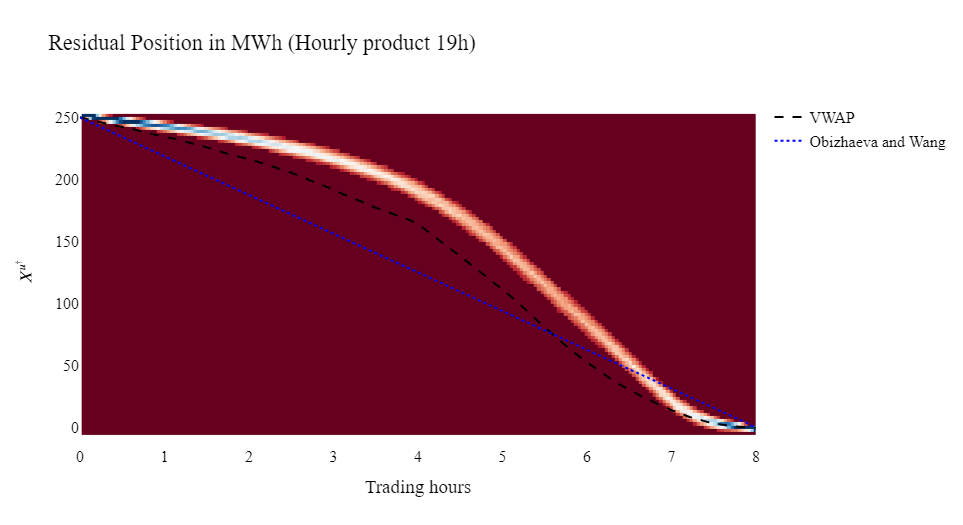}
        \caption{residual position for the 19h product.}
        \label{fig:plot19}
    \end{subfigure}
    \hfill
    \begin{subfigure}[b]{0.45\textwidth}
        \centering
        \includegraphics[width=\textwidth]{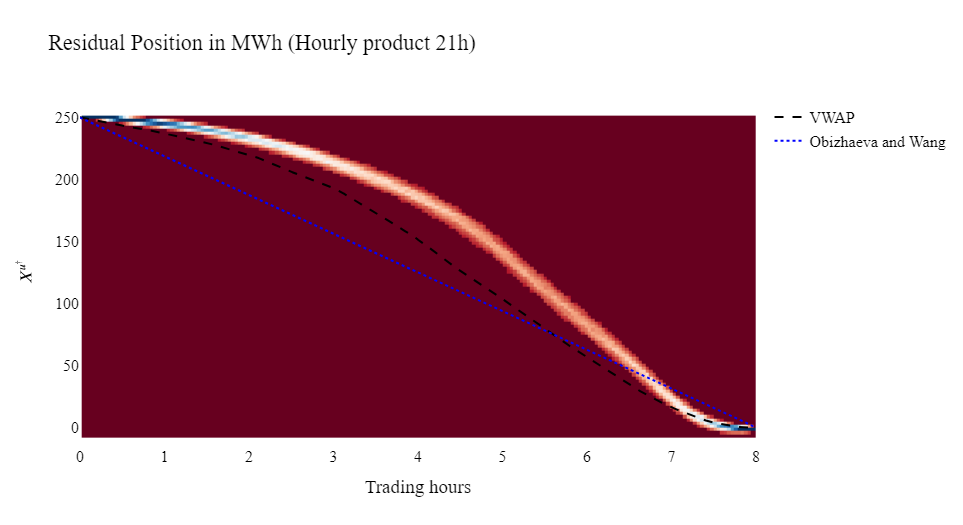}
        \caption{Residual position for the 21h product.}
        \label{fig:plot20}
    \end{subfigure}
    \caption{Trading trajectories (Residual position) for the optimal execution strategy (simulations), TWAP, and VWAP for the last 8 trading hours assuming a total of $X_0 = 250$ MWh to be liquidated. The simulations are based on parameters fitted for the 15h, 16h, 17h, 18h, 19h, 21h products. }
    \label{fig:ResidualPositionsCombined}
\end{figure*}

One can easily understand that different trading algorithms result in different trading trajectories. Having presented the mathematical formulation of our optimal execution strategy as well as the benchmark strategies that we consider for the backtesting of the optimal strategy, we will present some numerical results. The most straightforward way to understand volume allocation behaviour and the metaorder execution problem (in terms of a liquidation problem) is by inspecting a trade path plot. Figure~\ref{fig:ResidualPositionsCombined} showcases a MC simulation of the optimal trading trajectory, the VWAP and TWAP strategy. Considering that positions 200-300 MWh apply to many market participants and depict an industry-wide optimal execution problem, we consider that the trader wants to liquidate a total of 250 MWh in the ID market in a time window of T = 8h. We simulate the optimal strategy using a total of $N_{sims} =  1000$.

For this simulation we consider the calibrated parameters for the 15h, 16h, 17h, 18h, 19h and 21h products as well as the empirical volumes calculated for this hourly products. As we can see from the figure the VWAP strategy starts trading with slower trading rates and leaves a higher position open in the early/mid-trading phases. Just before the switch from pan-European XBID to local trading, we can see that positions are closed more aggressively leaving a smaller position in the last trading hour. Finally, the optimal execution strategy follows the trading trajectory of the VWAP strategy, however being even more passive at the start of the trading phase and being aggressive during the end. For the last trading hour, we can see how the strategy resembles even more than one of a VWAP execution type. Taking a look at the trading trajectories of the optimal strategy $X_t^*$ and the $X_t^{VWAP}$ these trading paths avoid both the higher BAS in the early phase of the trading (far from delivery) but also the high BAS levels in local trading. As we have already described Figure \ref{fig:TempImpactCost} showcases the mean instantaneous market impact coefficient with respect to the time to gate closure, showcasing the potential profitability of the optimal execution strategy, when our objective is to minimize the transaction costs of the liquidation strategy. 

\subsection{Backtesting and Transaction Cost Analysis }

In this section we will describe the back-testing methodology and show the performance of the optimal trading strategy compared to the benchmark VWAP and TWAP strategies. The backtesting window we considered ranges from 1st of January 2023 up to 1st of March 2024. Let's assume for now that we have a generic liquidating strategy $\xi_t^{\text{Strat}}$ and our goal is to liquidate a total of $X_0$ MWh over a fixed specified time horizon $T$. If not specified differently we set $T=8h$, so that the trader wants to liquidate his position in a fixed period of 8 hours before gate closure. We assume that the trader chooses to execute the metaorder via a sequence of child orders with size $\xi_t^{\text{Strat}}$ depending on the liquidating strategy he considers and assume that $\sum_{t=1}^T \xi_t^{\text{Strat}} = X_0$. The backtesting of the optimal execution problem first requires setting a suitable loss-function to minimize. One possible example is the average execution shortfall (or slippage), $
    \Delta ^{\text{Strat}}= \sum_{t=1}^T \xi_t^{\text{Strat}}(p_t - m_0),
$
which measures the total price paid for executing the metaorder, relative to the initial mid-price $m_0$, including the bid-ask spread. Under the assumption and the modeling framework we have chosen for the specification of the mid-price dynamics (linear propagator/transient impact kernel), we can actually write the average execution shortfall $ \mathcal{ D}^{\text{Strat}}:=\mathbb{E}\left[\Delta^{(Strategy)}\right]$ as

\begin{align}
\label{shortfall}
  \mathcal{ D}^{\text{Strat}}&= \!\sum_{t=1}^T \xi_t^{\text{Strat}}\big( \sum_{1\leq t'<t} G(t-t')\xi_{t'}^{\text{Strat}}\big)\!+\! \sum_{t=1}^T\frac{\xi_t^{\text{Strat}} s_t}{2}
\end{align}
where we have used that $p_t = m_t + s_t/2$, with $s_t/2$ the half-spread. We have to note that in our modeling framework the sign of the metaorder is uncorrelated with the order flow from the rest of the market.  We refer the reader to the following work \cite{CHTH23} for an optimal trade execution framework with endogenous order flow, where a trader’s own
order submissions trigger child orders and further influence future price dynamics. Equation \eqref{shortfall} shows that the expected execution shortfall is partly due to the spread and partly due to the impacts of past trades. Our goal is to use the expected shortfall as our cost function to back-test the optimal execution strategy to realized market data, by minimizing this cost functional. We will also consider the simplified case where the transient market impact component boils down to the instantaneous impact only which we have estimated using the procedure described in Section \ref{ExecutionCostEstimates}. For the half-spread cost we use the empirical time-weighted half-spread directly obtained from the LOB data as described in Section~\ref{ExecutionCostEstimates}.

We will implement the optimal execution strategy for each hourly product using the estimated parameters we have derived, both for the Hawkes process which models the arrival rates of trades (buy/sell). We are going to distinguish between the univariate and bivariate calibration using the obtained parameters we presented in Table~\ref{tab:combinedParams} for the univariate case and Table~\ref{tab:combined-table} for the bivariate case accordingly. The transaction cost analysis results, will be devoted to each hourly product separately. Our goal is to test whether the optimal strategy allows for cost reduction when we are considering the liquidation of a fixed position of total $X_0 = 200-300$ MWh. The relative performance measures with respect to the optimal strategy is given by $r_{\text{Bench}} =  ( \mathcal{  D}^{\text{Bench}} -  \mathcal{  D}^{\text{Strat}})/  \mathcal{  D}^{\text{Bench}}$. Starting from the univariate case, where the arrival of buy and sell market orders are independent, Figure \ref{fig:RelativeCostImprov} showcases the relative cost improvements (in percentage points) for each hourly trading product in comparison to the TWAP and VWAP strategies. The relative improvements in terms of cumulative execution costs are on average positive, indicating a clear cost reduction when the trader chooses to liquidate their position using the optimal execution strategy.
\begin{figure*}[!htbp]
    \centering
    \includegraphics[width=0.9\linewidth]{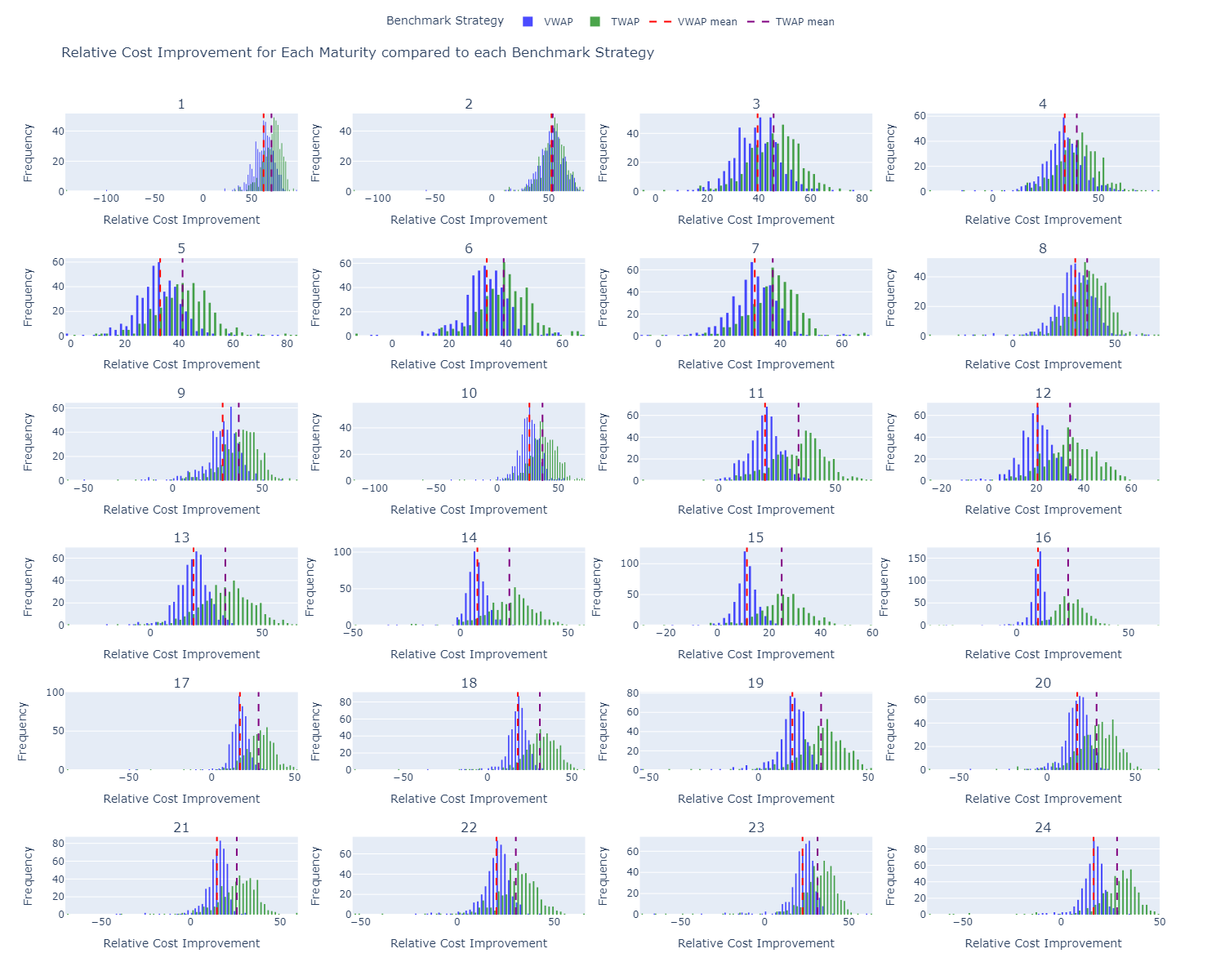}
    \caption{Relative Cost Improvement in terms of cumulative costs (\%) for each hourly product $r_{BenchMark}$. Vertical lines showcase the mean relative improvement of the optimal trading strategy compared to each benchmark strategy (VWAP/TWAP). }
    \label{fig:RelativeCostImprov}
\end{figure*}
For every trading hour, the relative improvement compared to the TWAP strategy is always greater than the improvement compared to the VWAP liquidation strategy. This likely occurs because the optimal trading trajectories derived from the VWAP strategy are closer to those obtained from the optimal strategy (see Figure \ref{fig:ResidualPositionsCombined}), whereas the TWAP strategy liquidates the position at a constant rate.

Examining the average relative cost improvements associated with the optimal strategy and the two benchmarks for each maturity, we observe from Figure \ref{fig:relative-cost-improvement-plot} and Table \ref{tab:relative-cost-improvement} that the optimal strategy achieves cost reduction for each product. Specifically, the cost reduction is very significant for the first hourly trading products (1h-4h) and then stabilizes for the rest of the trading hours, exhibiting a slight decrease for the mid-day trading products (12h-16h).

Considering the seasonality of liquidity across each trading product, as shown in Figure~\ref{fig:VarIntensity1}, there is a clear pattern in the associated liquidity of each hourly product. The most liquid products are associated with mid-day (noon) hours, while the first hourly trading products are the least liquid ones. This suggests that the amount of cost savings is negatively correlated with the average traded volume observed for each product. Therefore, comparatively less volatile, more liquid trading products at lower tick sizes may benefit less from the optimal trading strategy compared to the less liquid products, where the improvement is more pronounced.

Considering the \emph{time-dependent} (pathwise, not only terminal) behavior of trading costs, 
Figure~\ref{fig: RelativeImprTimeDepe} displays the surface of the relative cost difference across the trading hours (last 8 trading hours in our backtests) and hourly trading product ($1,\dots,24$). By construction, positive values indicate that the benchmark incurs a higher cumulative cost than the optimal strategy at that point in time. For completeness, terminal (end-of-horizon) relative improvements are reported in 
Figures~\ref{fig:relative-cost-improvement-plot}. In both benchmarks, the optimal strategy tends to incur smaller costs early in the session—consistent with its slower initial execution speed—and then accelerates as time to maturity shrinks. 
Immediately after the market decoupling into the \emph{local-trading} regime, we observe a temporary dip in the mean relative improvement, reflecting wider bid–ask spreads and thinner liquidity; 
if an insufficient fraction of the position is completed before decoupling, the instantaneous cost increases. 
Nevertheless, the strategy recovers and the terminal relative improvements at maturity ($T=8$\,h) remain positive against both TWAP and VWAP.

\begin{table}[!htbp]
    \centering
    \scriptsize
    \setlength{\tabcolsep}{3pt}
    \renewcommand{\arraystretch}{1.1}
    \begin{tabular}{>{\centering\arraybackslash}p{0.6cm}|
                    >{\centering\arraybackslash}p{2.7cm}|
                    >{\centering\arraybackslash}p{2.7cm}}
        \toprule
        \textbf{Hour} & \textbf{VWAP Mean ± Std (\%)} & \textbf{TWAP Mean ± Std (\%)} \\
        \midrule
        1  & 62.64 ± 12.57 & 70.73 ± 13.65 \\ 
        2  & 52.48 ± 11.69 & 53.55 ± 13.14 \\ 
        3  & 39.57 ± 8.69  & 45.77 ± 10.14 \\
        4  & 34.09 ± 9.48  & 39.80 ± 10.94 \\
        5  & 33.07 ± 8.93  & 41.35 ± 10.52 \\
        6  & 33.32 ± 8.10  & 39.28 ± 9.37  \\
        7  & 31.56 ± 8.07  & 37.43 ± 8.42  \\
        8  & 30.62 ± 9.97  & 36.47 ± 11.29 \\
        9  & 27.95 ± 10.58 & 36.97 ± 12.36 \\
        10 & 26.46 ± 9.18  & 37.16 ± 14.21 \\
        11 & 19.84 ± 6.75  & 34.24 ± 11.68 \\
        12 & 20.52 ± 7.26  & 34.24 ± 10.97 \\
        13 & 19.33 ± 7.05  & 33.56 ± 12.33 \\
        14 & 7.96 ± 5.33   & 22.81 ± 11.52 \\
        15 & 11.44 ± 4.48  & 24.90 ± 9.50  \\
        16 & 9.60 ± 3.88   & 23.06 ± 8.79  \\
        17 & 17.13 ± 6.33  & 28.33 ± 10.98 \\
        18 & 18.06 ± 6.41  & 31.14 ± 11.67 \\
        19 & 15.52 ± 7.20  & 28.61 ± 11.33 \\
        20 & 17.23 ± 7.35  & 28.19 ± 12.38 \\
        21 & 14.75 ± 7.18  & 25.84 ± 11.89 \\
        22 & 20.23 ± 7.27  & 30.23 ± 12.12 \\
        23 & 22.54 ± 9.34  & 31.47 ± 13.75 \\
        24 & 16.47 ± 5.77  & 28.38 ± 12.58 \\
        \bottomrule
    \end{tabular}
    \caption{Relative cost improvement statistics (\%) (mean ± std) compared to each benchmark strategy for each maturity in the model where there is only self-excitation between buy and sell MOs.}
    \label{tab:relative-cost-improvement}
\end{table}


\begin{figure*}[t]
    \centering
    \begin{subfigure}[b]{0.45\textwidth}
    \centering
    \includegraphics[width=\textwidth]{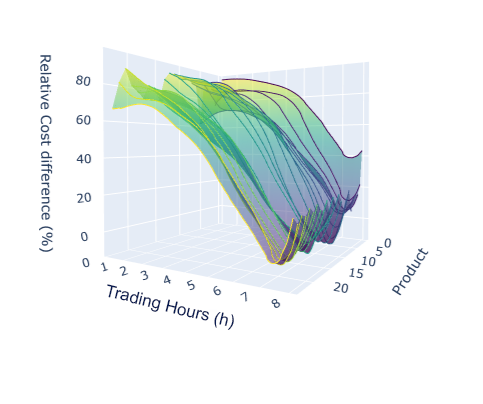}
    \label{fig:TimeDependentCostImprovementTWAP}
    \end{subfigure}
\hfill
    \begin{subfigure}[b]{0.45\textwidth}
    \centering
    \includegraphics[width=\textwidth]{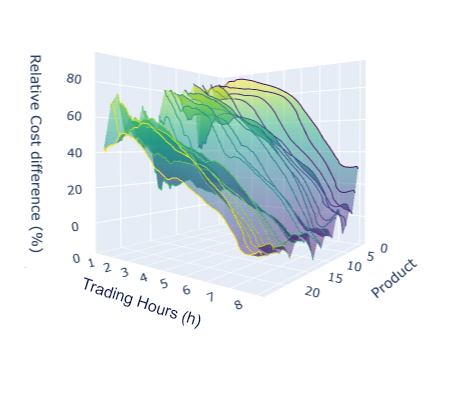}
    \label{fig:TimeDependentCostImprovementVWAP}
    \end{subfigure}
    \caption{Relative Cost Difference between the optimal trading strategy and the TWAP (left)/VWAP (right)  for different hourly Trading products and with respect to time to maturity. A positive value means the TWAP/VWAP has a higher cost, and vice versa.}
    \label{fig: RelativeImprTimeDepe}
\end{figure*}\noindent

Extending the model to the multivariate case, our goal is to test whether considering the full impact kernel with the cross-excitation coefficients, indeed achieves lower execution costs. Since there is not a observable difference in terms of the trading trajectories we just present the results in terms of the transaction costs of the optimal strategy. Figure~\ref{fig:RelativeCostImprov2} shown in the Appendix \ref{sec:complementaryRes} showcases the relative cost improvements (in percentage) for each hourly trading product, similar to the univariate case. The relative improvements in terms of cumulative execution costs are again always positive, however, it is not clear whether the improvement compared to the univariate case is significant. 
\begin{figure}[H]
    \centering
    \includegraphics[width=\linewidth]{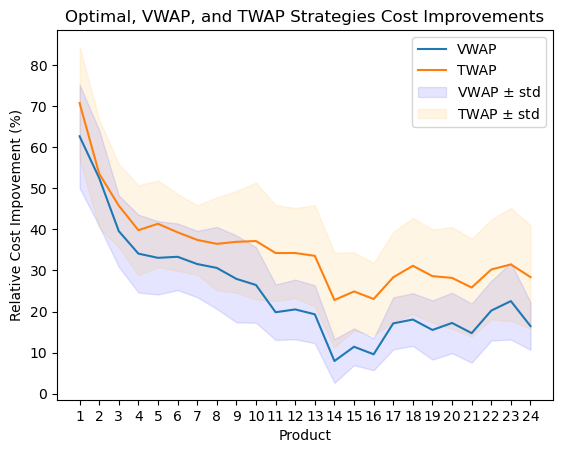}
    \caption{Comparison of relative cost improvements across maturities.}
    \label{fig:relative-cost-improvement-plot}
\end{figure}
Looking at the mean relative cost improvements of the bivariate model compared to the VWAP, TWAP and the univariate model Table \ref{tab:relative-cost-improvement2} and Figure \ref{fig:relative-cost-improvement-plot2} showcase that the results compared to the benchmark strategies are again significant. However, compared to the univariate case, where we have only the self-excitation parameters in the exponential kernel, the results vary across each hourly trading product, indicating that there is no clear improvement when considering the full kernel for the modeling of the order flow of buy and sell MOs. 

\begin{table}[!htbp]
    \centering
    \scriptsize
    \setlength{\tabcolsep}{2pt}
    \renewcommand{\arraystretch}{1.1}
    \begin{tabular}{>{\centering\arraybackslash}p{0.6cm}|
                    >{\centering\arraybackslash}p{2.3cm}|
                    >{\centering\arraybackslash}p{2.3cm}|
                    >{\centering\arraybackslash}p{2.3cm}}
        \toprule
        \textbf{Hour} & \textbf{VWAP Mean ± Std} & \textbf{TWAP Mean ± Std} & \textbf{Univariate Mean ± Std} \\
        \midrule
        1  & 54.74 ± 11.85 & 64.67 ± 12.83 & -23.17 ± 7.46 \\
        2  & 50.97 ± 12.82 & 52.11 ± 14.13 & -2.82 ± 4.47 \\
        3  & 38.70 ± 9.05  & 44.98 ± 10.52 & -1.38 ± 2.42 \\
        4  & 34.35 ± 10.34 & 40.00 ± 11.77 & \cellcolor{red!25}0.61 ± 1.85 \\
        5  & 32.08 ± 8.73  & 40.47 ± 10.63 & -1.60 ± 3.10 \\
        6  & 29.89 ± 8.51  & 36.12 ± 10.01 & -5.18 ± 2.67 \\
        7  & 35.95 ± 8.91  & 41.42 ± 9.14  & \cellcolor{red!25}6.67 ± 2.79 \\
        8  & 32.79 ± 10.10 & 38.43 ± 11.47 & \cellcolor{red!25}3.21 ± 1.83 \\
        9  & 28.07 ± 9.38  & 37.03 ± 11.78 & -0.07 ± 4.28 \\
        10 & 25.44 ± 8.72  & 36.26 ± 14.29 & -1.48 ± 2.41 \\
        11 & 21.13 ± 7.04  & 35.27 ± 11.81 & \cellcolor{red!25}1.64 ± 1.27 \\
        12 & 18.69 ± 7.24  & 32.57 ± 12.09 & -2.42 ± 4.56 \\
        13 & 15.30 ± 6.09  & 30.20 ± 12.73 & -5.15 ± 2.80 \\
        14 & 13.69 ± 5.04  & 27.51 ± 11.61 & \cellcolor{red!25}6.20 ± 2.06 \\
        15 & 13.46 ± 4.78  & 26.57 ± 9.86  & \cellcolor{red!25}2.28 ± 2.36 \\
        16 & 12.00 ± 4.55  & 24.93 ± 10.13 & \cellcolor{red!25}2.64 ± 3.26 \\
        17 & 15.39 ± 6.11  & 26.81 ± 10.96 & -2.14 ± 2.46 \\
        18 & 15.46 ± 4.98  & 28.98 ± 11.17 & -3.31 ± 2.26 \\
        19 & 13.62 ± 5.80  & 26.79 ± 12.33 & -2.54 ± 5.59 \\
        20 & 15.64 ± 6.05  & 26.78 ± 12.32 & -2.12 ± 3.79 \\
        21 & 13.52 ± 5.40  & 24.66 ± 11.99 & -1.67 ± 4.15 \\
        22 & 15.56 ± 6.27  & 26.17 ± 11.99 & -6.04 ± 3.20 \\
        23 & 18.45 ± 7.72  & 27.88 ± 12.60 & -5.57 ± 3.94 \\
        24 & 17.37 ± 5.80  & 29.14 ± 12.63 & \cellcolor{red!25}1.09 ± 0.61 \\
        \bottomrule
    \end{tabular}
    \caption{Relative cost improvement statistics (\%) (mean ± std) compared to each benchmark strategy for each maturity in the model where both self- and cross-excitation are considered.}
    \label{tab:relative-cost-improvement2}
\end{table}

\begin{figure}[!htbp]
    \centering
    \includegraphics[width=\linewidth]{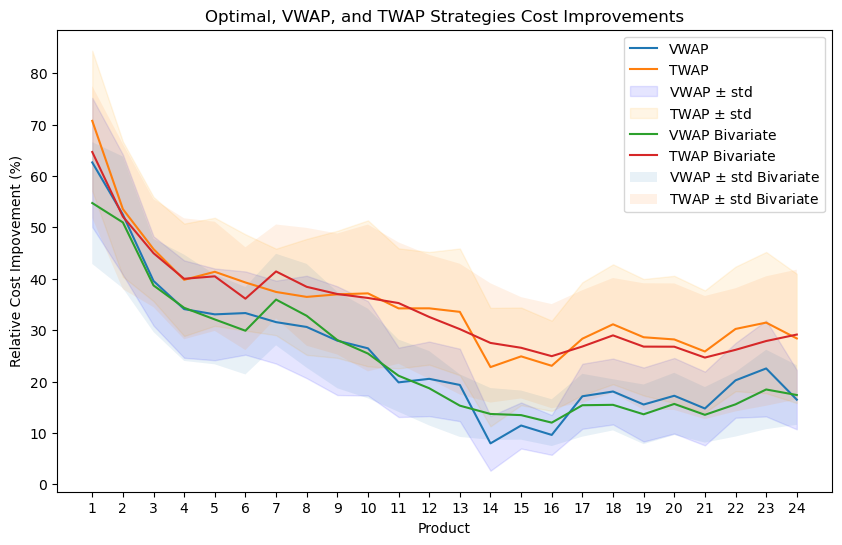}
    \caption{Comparison of relative cost improvements across maturities under self- and cross-excitation.}
    \label{fig:relative-cost-improvement-plot2}
\end{figure}

\section{Conclusion}
\label{conclusion}

In this study we investigate optimal execution strategies in the German Intraday Electricity Market by employing a Hawkes process for the modeling of the flow of market orders and a linear transient impact that decays exponentially to account for the resilience of the market. We introduced a novel calibration protocol, leveraging spline-based baseline intensity functions for the Hawkes process, effectively capturing intraday seasonality and trading activity dynamics throughout a trading session. 

A calibration protocol for the model is introduced using tick-by-tick LOB data for the German ID market. We also detail the procedures of estimating each types of market impact and transaction costs using raw LOB data. A series of backtests on the hourly products shows that the optimal strategy achieves significant cost reductions over traditional execution strategies, like TWAP and VWAP. Analysis on each individual product revealed that cost reductions are particularly substantial for early trading hourly products, and stabilized thereafter, with a slight decrease in the mid-day products where on average trading activity is higher. This indicates that cost savings are negatively correlated with
the average traded volume for each product, suggesting that less volatile, more liquid products
might benefit less from the optimal strategy compared to less liquid ones.

There are several potential extensions to the methodology and modeling framework presented
in this work. One possible extension is to model multiple contracts simultaneously rather than focusing on a single contract. Our current one dimensional optimal execution model does not capture the dependence between different maturities, which is crucial for valuing assets with payoffs depending on several maturities (e.g. battery valuation). This could be addressed by employing a multivariate transient price impact kernel which models how trades for a given trading hourly product would (directly or indirectly) impact the price of another product. 

One could also formulate an optimal execution problem under a cross-impact propagator model, where in intraday markets overlapping sessions exist for different delivery days. Given the strong interdependencies among products, incorporating cross-product and cross-impact effects may offer deeper insights into price formation in intraday electricity markets.

\section*{Acknowledgement}
We sincerely thank the two anonymous referees for their constructive feedback, which greatly improved the clarity and presentation of this paper.
Konstantinos Chatziandreou gratefully acknowledges the support through Statkraft Trading GmbH. Konstantinos Chatziandreou is grateful to Daniel Gruhlke and Simon Hirsch for many helpful discussions. This work contains the author’s opinion and does not necessarily reflect Statkraft’s position. The authors declare no conflict of interest.

\section*{Data Availability Statement}
The data that support the findings of this study are available from EPEX SPOT. The data are available from
the corresponding author upon reasonable request and if permission from EPEX SPOT is given.

\bibliographystyle{plainnat}
\bibliography{Bib.bib}

\appendix{}

\section{MLE for Hawkes Parameters}
\label{sec:MLEHawkes}

We now consider a $P$-variate multivariate Hawkes process in a deseasonalized setting. Let
\[
\mathbf{N}_t 
= \bigl(N_t^{(1)}, \dots, N_t^{(P)}\bigr)
\]
be a $P$-dimensional point process, whose $p$-th component has the conditional intensity
\begin{equation}\label{eq:HawkesProcessConditionalIntensity}
    \lambda_t^{(p)}
    = \lambda_{\infty}^{(p)}
    + \sum_{m=1}^P \int_{-\infty}^t \phi_{p,m}(t - u)\, dN_u^{(m)},
    \quad p = 1, \dots, P,
\end{equation}
where $\bm{\lambda}_\infty 
= \bigl(\lambda_\infty^{(1)}, \dots, \lambda_\infty^{(P)}\bigr)$ 
is the baseline intensity vector, and $\bm{\phi}(u) 
= \bigl(\phi_{p,m}(u)\bigr)_{p,m}$ is a nonnegative kernel matrix such that $\phi_{p,m}(u) = 0$ for all $u < 0$. The intensities $\lambda_t^{(p)}$ may exhibit both self- and cross-excitation, depending on whether $\phi_{p,m}$ is zero or nonzero for various indices $p \neq m$.

We estimate the Hawkes parameters via maximum likelihood (MLE). Suppose we observe the process over $[0, T]$, and let
\begin{align*}
\mathcal{T}_{(p)} 
= \bigl\{ t_1^{(p)}, \dots, t_{N_T^{(p)}}^{(p)} \bigr\}, 
\quad p = 1, \dots, P,    
\end{align*}
denote the ordered event times for each component $p$, where $N_T^{(p)}$ is the total number of events in component $p$ by time $T$. Denote by
\begin{align*}
\bm{\mathcal{T}}
= \big(\,\mathcal{T}_{(1)}, \dots, \mathcal{T}_{(P)}\big)    
\end{align*}
the full collection of events across all $P$ components.

The log-likelihood of a multivariate Hawkes process is then
\begin{equation}\label{GeneralLikelihood}
    \log \mathcal{L}\bigl(\Theta;\,\bm{\mathcal{T}}\bigr)
    = \sum_{p=1}^P \Big[
    \sum_{j=1}^{N_T^{(p)}} 
    \log \lambda_{\,t_j^{(p)}}^{(p)}
    \int_0^T \lambda_u^{(p)} \, du
    \Big],
\end{equation}
where $\Theta$ represents the parameter set (e.g., baseline intensities, kernel parameters). The first term sums the logarithms of the intensities evaluated at each observed event time, while the second term integrates the intensity over $[0,T]$. 

\begin{proposition}\label{MLE-Multivariate Hawkes}
We define $\Delta_{T,k}^n := T - t_k^n$ and $\Delta_{k,i}^{mn} := t_k^m - t_i^n$, and let the kernel function of the Hawkes process be of exponential form. Then the conditional intensity function in Equation~\eqref{eq:HawkesProcessConditionalIntensity} reads
\begin{equation}
 \lambda^{(p)}_t = \lambda^{(p)}_{\infty} + \sum_{m=1}^P \sum_{j=1}^{N^m(t)} \alpha_{pm}\exp\left(-\beta_{pm}(t-t_j^m)\right),
\end{equation}
where $\bm{\alpha} = \left(\alpha_{pm}\right)$ and $\bm{\beta} = \left(\beta_{pm}\right)$ are the excitation and decay parameters. In this particular case, the Markovianity of the stochastic process $\left(\bm{N},\bm{\lambda}\right)$ is guaranteed. This allows one to compute the likelihood estimator, gradient, and Hessian in a recursive way. In particular, we have:
\begin{align*}
    \log \mathcal{L}^{m} \left(\Theta;\bm{\mathcal{T}}\right) 
    &= - \lambda_{\infty}^m T 
    \!- \sum_{n=1}^P \frac{\alpha_{mn}}{\beta_{mn}}\!\sum_{t_k^n < T} (1 \!-\! \E^{-\beta_{mn} \Delta_{T,k}^n}) \\
    &\quad\! + \sum_{t_k^m < T} \log\big(\lambda_{\infty}^m \!+ \sum_{n=1}^P \alpha_{mn} R_{mn}(k)\big),
\end{align*}
where $R_{mn}(k) \!= \sum_{i: t_i^n < t_k^m} \exp\left(-\beta_{mn} \Delta_{k,i}^{mn} \right)$ for $k \geq 2$, and $R_{mn}(1) = 0$. This can be defined recursively as
\begin{align*}
    R_{mn}(k) &= \exp\left(-\beta_{mn} \Delta_{k,k-1}^{mm}\right) R_{mn}(k-1) 
    \\ 
    &\quad + \sum_{i: t_{k-1}^m < t_i^n < t_k^m} \exp\left(-\beta_{mn} \Delta_{k,i}^{mn} \right).
\end{align*}

The gradient of the log-likelihood with respect to the parameters is given by:
\begin{align*}
    \frac{\partial L^m}{\partial \lambda^m_{\infty}} &= -T + \sum_{k: t_k^m < T} \frac{1}{\lambda^m_{\infty} + \sum_{j=1}^P \alpha_{mj} R_{mj}(k)}, \\
    \frac{\partial L^m}{\partial \alpha_{mn}} &= - \frac{1}{\beta_{mn}} \sum_{k: t_k^n < T} \left[1 - \exp\left(-\beta_{mn} \Delta_{T,k}^n\right)\right]\\
    &\quad + \sum_{k: t_k^m < T} \frac{R_{mn}(k)}{\lambda^m_{\infty} + \sum_{j=1}^P \alpha_{mj} R_{mj}(k)}, \\
    \frac{\partial L^m}{\partial \beta_{mn}} &= 
    \frac{\alpha_{mn}}{\beta_{mn}^2} \sum_{k: t_k^n < T} \left[1 - \exp\left(-\beta_{mn} \Delta_{T,k}^n\right)\right] \\
    &\quad - \frac{\alpha_{mn}}{\beta_{mn}} \sum_{k: t_k^n < T} \Delta_{T,k}^n \exp\left(-\beta_{mn} \Delta_{T,k}^n\right)\\ 
    &\quad - \sum_{k: t_k^m < T} \frac{\alpha_{mn} R_{mn}'(k)}{\lambda^m_{\infty} + \sum_{j=1}^P \alpha_{mj} R_{mj}(k)},
\end{align*}
where $R_{mn}'(1) = 0$ and
\[
R_{mn}'(k) := \sum_{i: t_i^n < t_k^m} \Delta_{k,i}^{mn} \exp\left(-\beta_{mn} \Delta_{k,i}^{mn} \right).
\]

The Hessian terms include:
\begin{align*}
    \frac{\partial^2 \mathcal{L}^m}{\partial (\lambda_{\infty}^m)^2} 
    &= - \sum_{k: t_k^m < T} \left[ \frac{1}{\lambda^m_{\infty} + \sum_{j=1}^P \alpha_{mj} R_{mj}(k)} \right]^2,
    \end{align*}

    \begin{align*}
    \frac{\partial^2 \mathcal{L}^m}{\partial \lambda_{\infty}^m \partial \lambda_{\infty}^n} &= 0, \quad m \neq n,
    \end{align*}
    
     \begin{align*}
    \frac{\partial^2 \mathcal{L}^m}{\partial \alpha_{mn}^2} 
    &= - \sum_{k: t_k^m < T} \left[ \frac{R_{mn}(k)}{\lambda^m_{\infty} + \sum_{j=1}^P \alpha_{mj} R_{mj}(k)} \right]^2,
     \end{align*}
For $ \quad n' \neq n$:
     \begin{align*}
    \frac{\partial^2 \mathcal{L}^m}{\partial \alpha_{mn} \partial \alpha_{mn'}} 
    &= - \sum_{k: t_k^m < T} \frac{R_{mn}(k) R_{mn'}(k)}{\left(\lambda^m_{\infty} + \sum_{j=1}^P \alpha_{mj} R_{mj}(k)\right)^2},
      \end{align*}
      For $\quad m' \neq m$:
      \begin{align*}
    \frac{\partial^2 \mathcal{L}^m}{\partial \beta_{mn} \partial \lambda_{\infty}^{m'}} &= 0,  \\
    \frac{\partial^2 \mathcal{L}^m}{\partial \beta_{mn} \partial \alpha_{mn}} &= 
    - \frac{1}{\beta_{mn}} \sum_{k: t_k^n < T} \Delta_{T,k}^n \exp\left(-\beta_{mn} \Delta_{T,k}^n\right) \\
    &\quad + \frac{1}{\beta_{mn}^2} \sum_{k: t_k^n < T} \left[ 1 - \exp\left(-\beta_{mn} \Delta_{T,k}^n\right) \right] 
   \\
   &\quad - \sum_{k: t_k^m < T} \frac{R'_{mn}(k)}{\lambda^m_{\infty} + \sum_{j=1}^P \alpha_{mj} R_{mj}(k)} \\
    &\quad + \sum_{k: t_k^m < T} \frac{\alpha_{mn} R_{mn}'(k) R_{mn}(k)}{\left(\lambda^m_{\infty} + \sum_{j=1}^P \alpha_{mj} R_{mj}(k)\right)^2}
    \end{align*}
  For $m' \neq m, \quad n, n' \in \{1, \dots, P\}$: 
    \begin{align*}
    \frac{\partial^2 L^m}{\partial \beta_{mn} \partial \alpha_{mn'}} &= 
    \sum_{k: t_k^m < T} \frac{\alpha_{mn} R_{mn}'(k) R_{mn'}(k)}{\left(\lambda_{\infty}^m + \sum_{j=1}^P \alpha_{mj} R_{mj}(k)\right)^2}, \\
    \frac{\partial^2 L^m}{\partial \beta_{mn} \partial \alpha_{m'n'}} &= 0,
\end{align*} 
\begin{align*}
    \frac{\partial^2 L^m}{\partial \beta_{mn}^2} &= 
    -\frac{2 \alpha_{mn}}{\beta_{mn}^3} \sum_{t_k^n < T} \left[ 1 - \exp\left(-\beta_{mn} \Delta_{T,k}^n\right) \right] \\
    &\quad + \frac{2 \alpha_{mn}}{\beta_{mn}^2} \sum_{t_k^n < T} \Delta_{T,k}^n \exp\left(-\beta_{mn} \Delta_{T,k}^n\right) \\
    &\quad + \frac{\alpha_{mn}}{\beta_{mn}} \sum_{t_k^n < T} (\Delta_{T,k}^n)^2 \exp\left(-\beta_{mn} \Delta_{T,k}^n\right) \\
    &\quad + \sum_{t_k^n < T} \Big[ \frac{\alpha_{mn} R_{mn}''(k)}{\lambda^m_{\infty} + \sum_{j=1}^P \alpha_{mj} R_{mj}(k)}\\ 
    &\quad - \left( \frac{\alpha_{mn} R_{mn}'(k)}{\lambda^m_{\infty} + \sum_{j=1}^P \alpha_{mj} R_{mj}(k)} \right)^2 \Big],
\end{align*}
where
\[
R_{mn}''(k) := \sum_{i: t_i^n < t_k^m} (\Delta_{k,i}^{mn})^2 \exp\left(-\beta_{mn} \Delta_{k,i}^{mn}\right), \quad R_{mn}''(1) = 0.
\]

\begin{align*}
\frac{\partial^2 L^m}{\partial \beta_{mn} \partial \beta_{mn'}} &= 
-\sum_{k : t_k^m < T} \frac{\alpha_{mn} R_{mn}'(k) \alpha_{mn'} R_{mn'}'(k)}{\left(\lambda_{\infty}^m + \sum_{j=1}^P \alpha_{mj} R_{mj}(k)\right)^2},\\ 
\frac{\partial^2 L^m}{\partial \beta_{mn} \partial \beta_{m'n'}} &= 0, \quad m' \neq m.
\end{align*}
\end{proposition}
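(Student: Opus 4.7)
The starting point is the well-known log-likelihood formula for a multivariate point process observed on $[0,T]$ with predictable conditional intensities $\lambda_t^{(m)}$, namely
\[
\log \mathcal{L}(\Theta;\bm{\mathcal{T}})
= \sum_{m=1}^P \Bigl[\,\sum_{k:\,t_k^m < T}\!\log \lambda_{t_k^m}^{(m)} \;-\; \int_0^T \lambda_u^{(m)}\,du\Bigr],
\]
which coincides with \eqref{GeneralLikelihood}. Since the components of the log-likelihood decouple in $m$, the plan is to carry out the analysis one component at a time; all the formulas in the proposition follow by symmetry and independence of the components $\log\mathcal{L}^m$.

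First I would substitute the exponential kernel $\phi_{mn}(u)=\alpha_{mn}\E^{-\beta_{mn}u}\mathbf{1}_{u\ge 0}$ into \eqref{eq:HawkesProcessConditionalIntensity} and evaluate the compensator $\int_0^T\lambda_u^{(m)}\,du$ by Fubini, exchanging the time integral with the sum over past events. A single event at $t_i^n$ contributes $\int_{t_i^n}^T \alpha_{mn}\E^{-\beta_{mn}(u-t_i^n)}\,du=(\alpha_{mn}/\beta_{mn})(1-\E^{-\beta_{mn}\Delta_{T,i}^n})$, which yields the compensator term. The log-intensity term is obtained directly by evaluating $\lambda_{t_k^m}^{(m)}$ and factoring out the common exponential $\E^{-\beta_{mn}(t_k^m-t_i^n)}$ into the definition of $R_{mn}(k)$. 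Combining these two pieces gives the closed-form expression for $\log\mathcal{L}^m$.

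Next I would establish the recursion for $R_{mn}(k)$ by splitting the summation index set $\{i:\,t_i^n<t_k^m\}$ into events before $t_{k-1}^m$ and events in $(t_{k-1}^m,t_k^m)$, then using the semigroup identity $\E^{-\beta_{mn}(t_k^m-t_i^n)}=\E^{-\beta_{mn}\Delta_{k,k-1}^{mm}}\E^{-\beta_{mn}(t_{k-1}^m-t_i^n)}$ on the first piece. This is the Markovianity step and is also the step that makes the gradient and Hessian computationally tractable: by the same argument one obtains analogous recursions for $R_{mn}'(k)$ and $R_{mn}''(k)$, each picking up an extra factor of $\Delta_{k,i}^{mn}$ per time-derivative of $\E^{-\beta_{mn}\Delta_{k,i}^{mn}}$.

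Finally, the gradient and Hessian formulas follow from term-by-term differentiation of the closed form for $\log\mathcal{L}^m$. The $\lambda_\infty^m$ derivatives are immediate; the $\alpha_{mn}$ and $\beta_{mn}$ derivatives of the compensator term are obtained by differentiating $(1-\E^{-\beta_{mn}\Delta})/\beta_{mn}$, while the corresponding derivatives of the log-intensity term use the chain rule together with $\partial_{\beta_{mn}} R_{mn}(k)=-R_{mn}'(k)$ and $\partial_{\beta_{mn}}^2 R_{mn}(k)=R_{mn}''(k)$. Second derivatives with respect to indices belonging to different components $m\neq m'$ vanish because $\log\mathcal{L}^m$ depends only on the row-$m$ parameters $(\lambda_\infty^m,\alpha_{m\cdot},\beta_{m\cdot})$; this explains all the zero entries in the Hessian block. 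The main obstacle is not conceptual but notational: carefully tracking the indices $(m,n,n',m')$ and the two layers of quotient rule in the cross-derivatives $\partial_{\beta_{mn}}\partial_{\alpha_{mn'}}\log\mathcal{L}^m$, and ensuring the boundary conventions $R_{mn}(1)=R_{mn}'(1)=R_{mn}''(1)=0$ are consistent with the recursions. Once the bookkeeping is organized row-by-row, each formula reduces to a short direct calculation.
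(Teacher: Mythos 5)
Your proposal is correct and follows essentially the same route the paper intends: the paper gives no explicit proof but derives everything from the general point-process log-likelihood \eqref{GeneralLikelihood} specialized to exponential kernels, with the Ozaki-style recursion for $R_{mn}(k)$ (and its $\beta$-derivatives $R'_{mn}$, $R''_{mn}$) making the compensator, gradient, and Hessian computable term by term, exactly as you outline. Your observation that $\partial_{\beta_{mn}}R_{mn}(k)=-R'_{mn}(k)$ and that cross-component second derivatives vanish because $\log\mathcal{L}^m$ depends only on the row-$m$ parameters is precisely the bookkeeping the stated formulas rely on.
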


\section{Goodness of fit analysis}
\label{sec:GoodnessofFit}

In this section we aim to describe how we selected the best model from the estimated point process models describing the order-flow. We will first introduce a general theory for goodness-of-fit analysis of point processes. Let the compensator (or the cumulative intensity process) of a point process $N_t$ be defined by the following integral of the conditional intensity 
\begin{equation}
    \Lambda(t) = \int_0^t \lambda(v)dv
\end{equation}

By using the compensator, the technique of random time change transforms a point process with intensity $\lambda(t)$ into a standard Poisson process. The underlying rationale is that if the proposed point process model aligns well with the observed data, the time-changed process should exhibit the defining characteristics of a standard Poisson process.

Consider a point process $N_t$ whose events occur at times $\tau_n$. According to the theorem, the transformed arrival times $\tilde{\tau}_n = \Lambda(\tau_n)$ correspond to the arrival times of a unit-rate Poisson process. As a result of the Papangelou’s theorem~\cite[Theorem 7.4.I]{daley2006introduction} , the transformed interarrival times

\begin{equation}
    \tilde{\Delta}_i = \Lambda(\tau_{i+1}) - \Lambda(\tau_i) = \int_{\tau_i}^{\tau_{i+1}} \lambda(v)\, dv
\end{equation}

are i.i.d.\ random variables following the $\text{Exp}(1)$ distribution. This finding is crucial because it allows us to verify the accuracy of a fitted point process model. By employing the random time change theorem, we can assess whether the estimated model appropriately characterizes the observed point process data.

This enables us to use statistical tests or visually check by using Quantile-Quantile plots the underlying fits of the calibrated models. The most typical statistical tests in this context are the Anderson and Darling test for the exponential distribution and the Kolmogorov-Smirnov (KS) test. The independence assumption of the compensated inter arrival times can then be tested through a Ljung-Box test for serial auto correlation. Moreover, since we are interested in evaluating different calibrated models we use some quite common techniques for model selection as the AIC score (Akaike information criterion) which is defined by $AIC = -2 \ln(L(\hat{\theta})) + 2K$, where $L$ is the likelihood function, $\hat{\theta}$ is the vector of estimated parameters of the underlying model and $K$ is the number of free parameters to be estimated. Finally, since both models have been calibrated using a Maximum Likelihood method, we also check the differences of the maximized log-likelihoods. 

\subsection{Goodness of Fit Results: Comparison Piecewise and Spline approximation}
\label{sec:GoodnessofFitResults}
The goal is to examine each of the above representations for the instantaneous intensity process of the Hawkes processes and determine in terms of a Goodness of Fit analysis.
\par 

Starting from the univariate case, which is equivalent to the model where we have a zero component in the non-diagonals of the exponential kernel matrix of the Hawkes process we examine the two different choices of the baseline function. This model essentially boils down to the model where the order-flow is uncorrelated, meaning that buy market orders only excite orders of the same sign. The first and the most naive one would be to look into hourly blocks of the LOB and for each of these blocks to fit the underlying Hawkes process in order to account for the regime switch of the baseline intensity. Of course, this naive approach is also time-consuming. Considering a total of one year of back-test data, along with the calibration of each of the 24 products and the last 8 tradable hours, this approach does not seem appealing at first sight. Hence the second approach would be to consider an exponential baseline function. However, as we described above we are now going to consider a new approach by non-parametrically estimating the baseline intensity using the idea originally introduced in the paper \cite{Chen_Hall_2013} to account for the empirical finding motivated by Figure \ref{fig:VarIntensity1}, which describes the drop of trading activity during the local trading hours and the sharp increase towards delivery for the first trading hours.  We can see from the following Figure \ref{fig:PieceWiseSpline} that both the piecewise approximation \footnote{Looking into hourly blocks of the LOB and fitting the Hawkes parameters by assuming a constant baseline intensity for each hourly block} as well as the non-parametric approximation of the baseline intensity seem to capture the increase of the market activity for both the sell and buy side of the LOB. However, it is clear from the graph that the median baseline intensities estimated using the non-parametric approach seem to better fit the drop in the market activity immediately after the start of the local trading regime, for both sides of the LOB; which is also in agreement with the empirical intensities presented in Figure~\ref{fig:VarIntensity1}. In Figure \ref{fig:PieceWiseSpline} we showcase the result only for the hourly products (16h - 18h). The total number of basis points used for the estimation of the baseline intensity is $N_{basis} = 10$. One more nice feature that we observe is that on average the shape of the baseline function for each delivery hour seems to be close. This is also showcased in Figure \ref{fig: CalibratedSpline} where we present the calibrated baseline intensity for each side of the LOB and each hourly product.

\begin{figure}[!htbp]
    \centering
    \includegraphics[width=\linewidth]{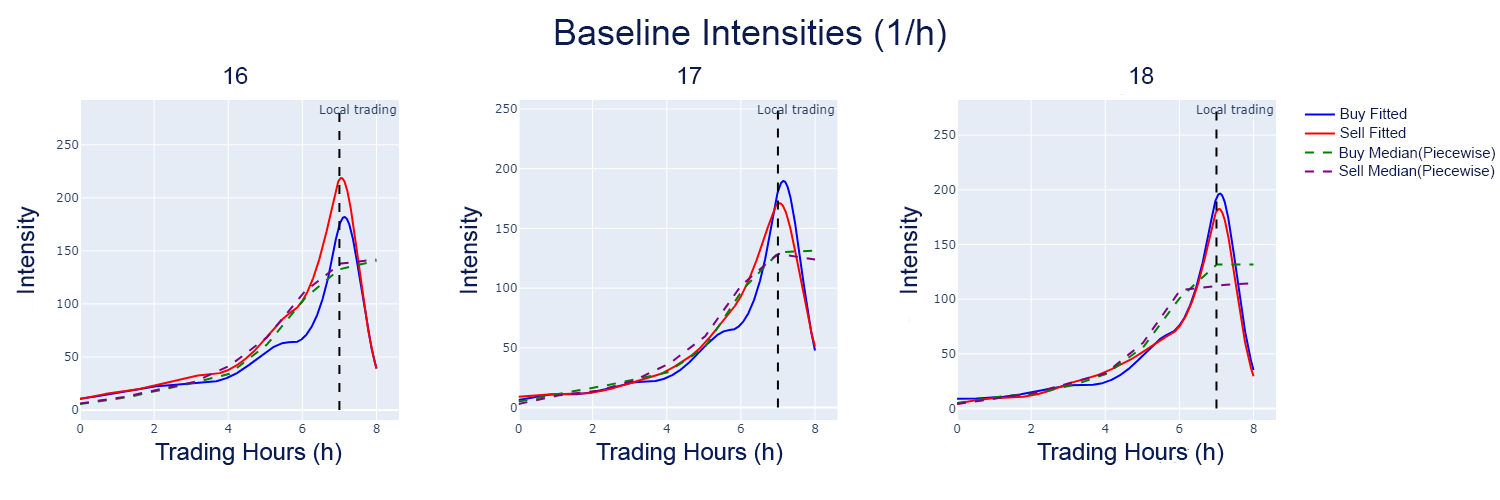}
    \caption{Basis function and piecewise (hourly blocks) approximation of the baseline intensity of the arrival rates of Buy/Sell MOs (Trading intensity)}
    \label{fig:PieceWiseSpline}
\end{figure}

\begin{figure}[!htbp]
    \centering
    \includegraphics[width=\linewidth]{Figures/newplot_-_2024-08-01T105056.686_copy_2.png}
    \caption{Empirical and fitted intensity for Buy/Sell Market order; We estimate the empirical intensity $\hat{\lambda}(t) = \mathbb{E}\left(N(t+\delta t) - N(t)\right)/\delta t$ as before. The fitted intensity is then generated by generating sequences (arrival times of Buy/Sell MOs) from the spline approximation Hawkes model and computing again the intensity based on the generated arrival times using the same methodology as described for the empirical observations.}
    \label{fig:PieceWiseSpline3}
\end{figure}

In order to evaluate the performance of each model we perform a Goodness-of Fit analysis. For the comparison of the models, we use the calibrated models for each maturity (24 in total) between the dates (2023-01-26 until 2023-12-03 a total of 311 days). Hence we have a total of 7464 models for each different approximation of the baseline intensity. Since both models have been calibrated using a Maximum Likelihood method, one way to compare the fitting performance of each model is by using a Likelihood ratio test. However, in this work, we simply use a more informal inspection of the differences in the maximum log-likelihood values of the model with a baseline piecewise constant function and the model with a spline approximation of the baseline function. Since we are calibrating each model independently we showcase the differences for each hourly contract and each daily calibration. The following Figure showcases the box plots of the difference for each hourly product and each orderbook side (Buy/Sell). For both sides and each hourly the Maximized value of the log-likelihood function for the model with piece wise constant baseline function is on average lower than the spline approximation model, which is attributed to the positive differences of the maximum likelihood values showcased in the Box plots \ref{fig:MLEdiffBuy} - \ref{fig:MLEdiffSell} \footnote{We note that the boxplots include all fitted models; (2023-01-26 until 2023-12-03 a total of 311 days) for buy and sell market orders respectively.}.

\begin{figure}[!htbp]
    \centering
    \includegraphics[width=\linewidth]{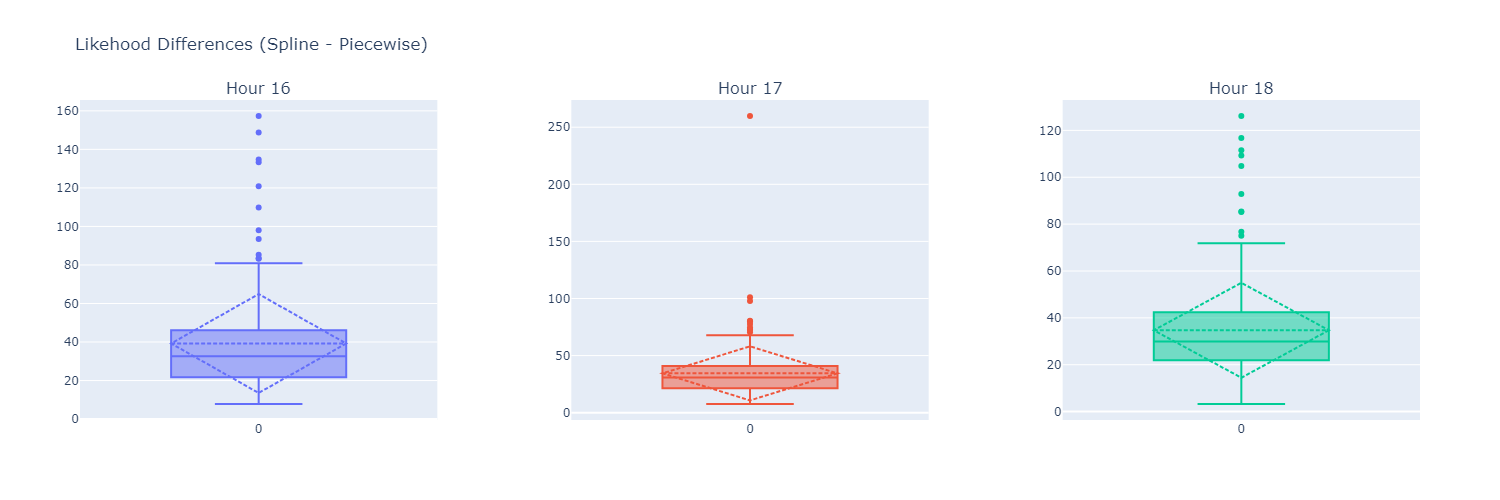}
    \caption{Difference between maximized log-likelihood of the Spline and Piecewise approximation Model (\textbf{Buy Side})}
    \label{fig:MLEdiffBuy}
\end{figure}
\begin{figure}[!htbp]
    \centering
    \includegraphics[width=\linewidth]{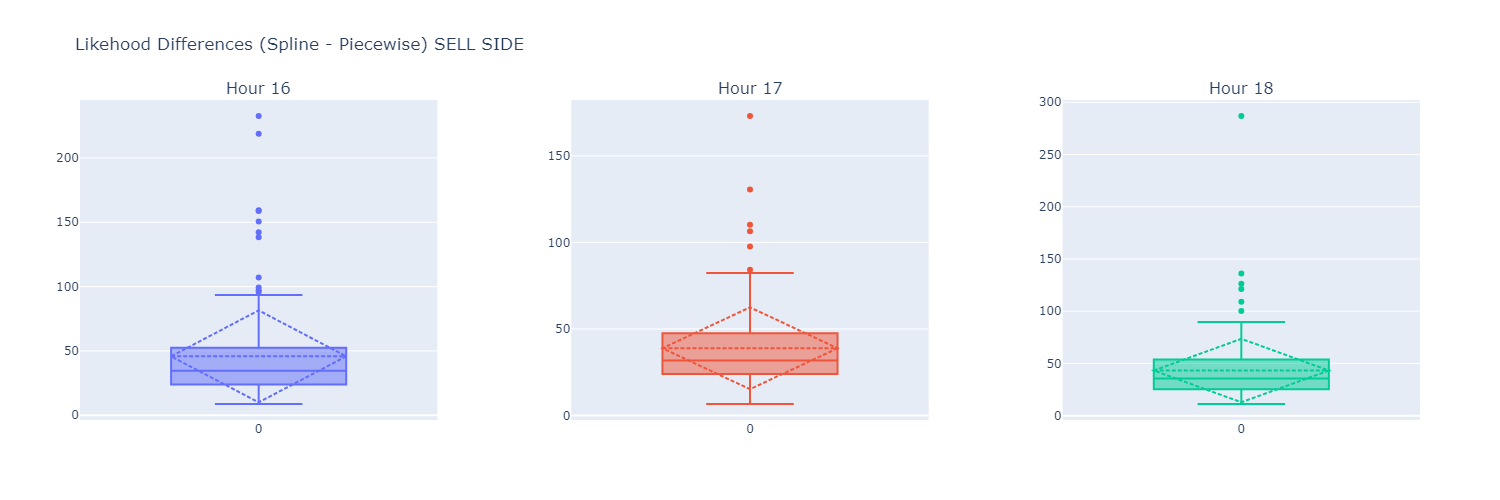}
    \caption{Difference between maximized log-likelihood of the Spline and Piecewise approximation Model (\textbf{Sell Side})}
    \label{fig:MLEdiffSell}
\end{figure}

In a similar fashion we compare the AIC differences for each hourly product.  Recall that in comparison to the log-likelihood tests the AIC scores indicate the best-performing model when the score is smaller. From the following figure, we can again see that the AIC score of the spline model is smaller (since we are testing the differences between (AIC Piecewise-  AIC Spline) and we can see a positive difference between the two) for each calibrated trading hour. This indicates that the performance of the spline fit on the baseline intensity of the Hawkes process provides a better fit to the data. This result is obtained for both sides of the LOB (Buy/Sell) (See Figures \ref{fig:AICdiffBuy}-\ref{fig:AICdiffSell}).

\begin{figure}[!htbp]
    \centering
    \includegraphics[width=\linewidth]{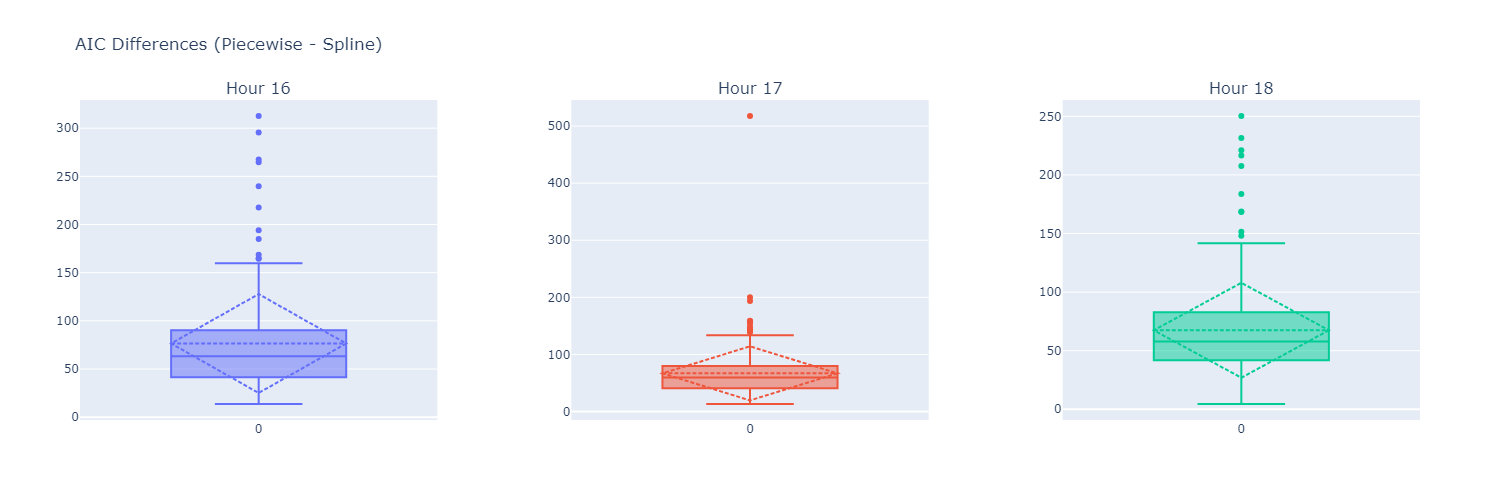}
    \caption{Differences in the AIC values - Spline Model and Piecewise approximation of the baseline intensity of the Hawkes process (\textbf{Buy Side}).}
    \label{fig:AICdiffBuy}
\end{figure}
\begin{figure}[!htbp]
    \centering
    \includegraphics[width=\linewidth]{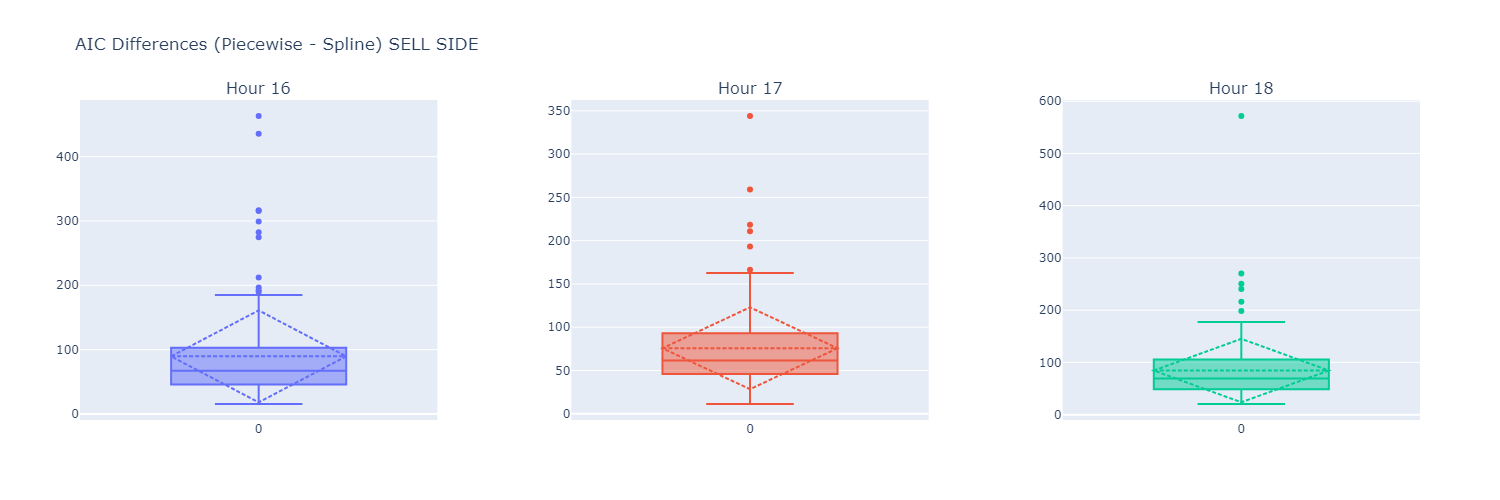}
    \caption{Differences in the AIC values - Spline Model and Piecewise approximation of the baseline intensity of the Hawkes process (\textbf{Sell Side}).}
    \label{fig:AICdiffSell}
\end{figure}

Since we have multiple calibrated models for each day and each maturity using a QQ plot between the compensated interarrival times and a unit exponential distribution would not be a very informative way to derive a qualitative conclusion on the performance of each separate model. Hence we choose to use performance metrics instead as the Anderson and Darling test for the exponential distribution and the Kolmogorov-Smirnov (KS) test. We note that in general the Anderson-Darling test is much more sensitive to the tails of the distribution, whereas KS test is more aware of the center of the distribution. One more way to test the "distance" of the time-changed inter-arrival times and the unit exponential distribution is to use a distance function defined between probability distributions on a given metric space. Suppose $\mathbb{P}$ and $\mathbb{Q}$ are two probability distributions on the real line, with corresponding cumulative distribution functions $F$ and $G$. The Wasserstein distance $W(\mathbb{P}, \mathbb{Q})$ is defined as $W(\mathbb{P}, \mathbb{Q}) = \int_{-\infty}^{\infty} |F(x) - G(x)| \, dx.$

This enables us to validate the estimated model using both the Anderson-Darling Test as well as the Wasserstein distance on an exponential distribution with parameter $\lambda = 1$ (see also \cite{Bowsher2007}). The following figures depict the Wasserstein distances for buy and sell market orders respectively and include once more all the fitted models (2023-01-26 until 2023-12-03). We only showcase the results for three maturities due to a lack of space, however, the rest of the calibrated hourly products can be found in Appendix \ref{sec:complementaryRes}. Looking at the Buy/Sell side, the median Wasserstein distance for the model with the spline approximation of the baseline intensity is significantly lower than for the piece wise approximation (see Figures \ref{fig:WassersteinBUY}-\ref{fig:WassersteinSELL}). Additionally, looking at the inter quantile range (IQR) of the distance for the case of the spline model, it is considerably lower than the piecewise approximation. The Anderson-Darling test showcases that the fit to a unit exponential is better on both sides of the LOB since the values of the test statistic are lower in both cases. Finally, we examine the independence of the time-changed interarrival times using the Ljung-Box test. The results of the p-values plot in Figures~\ref{fig:Ljung-BoxBuy}-\ref{fig:Ljung-BoxSell}, again reveal that since the mean of the p-values of all fitted models is not below $0.05$, which is the selected significance level, both tests for the spline and piecewise models agree that you can not reject the null of no auto-correlation between the series and each of its first 100 lags with > 95\% confidence level. However, testing the number of times that the null hypothesis was rejected between all fitted models (days), and examining the difference between the spline and the piece wise model, we can conclude that this number is smaller for the spline model, indicating that most of the time the time-changes inter arrival times of the spline model are indeed serially uncorrelated. 

\begin{figure}[!htbp]
    \centering
    \includegraphics[width=\linewidth]{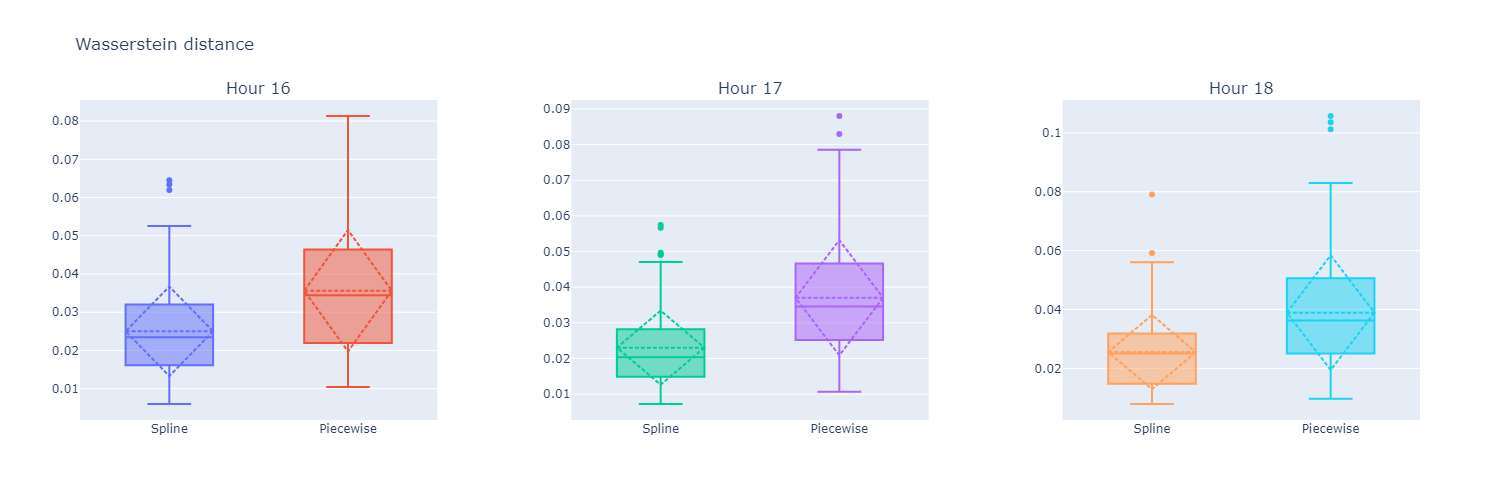}
    \caption{Wasserstein distance (\textbf{Buy Side})}
    \label{fig:WassersteinBUY}
\end{figure}

\begin{figure}[!htbp]
    \centering
    \includegraphics[width=\linewidth]{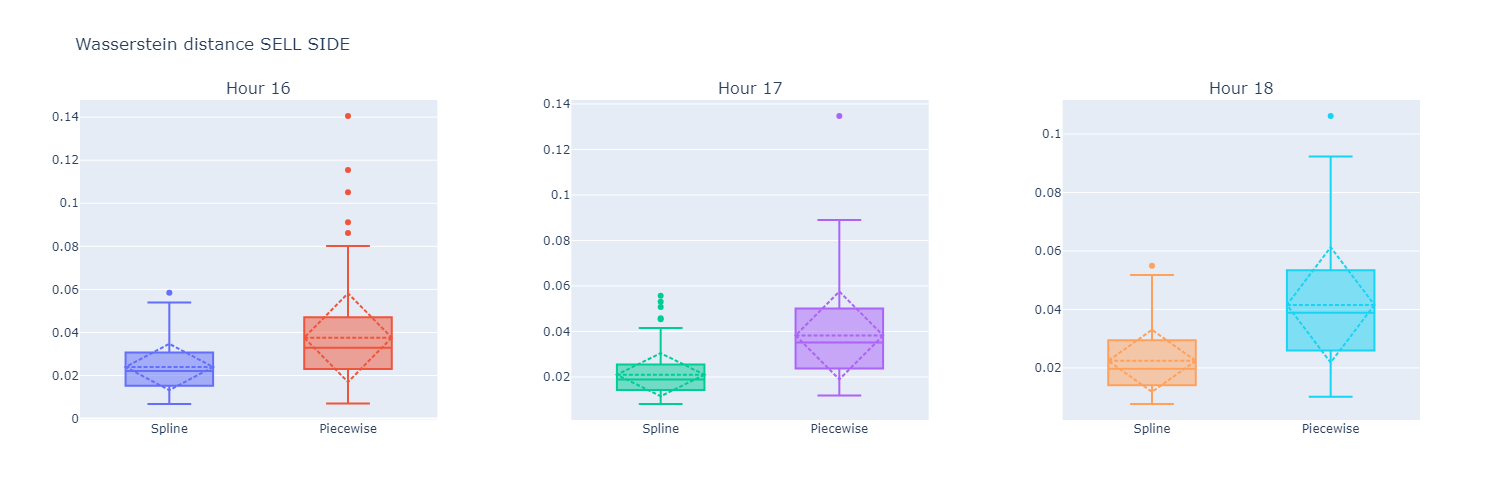}
    \caption{Wasserstein distance (\textbf{Sell Side})}
    \label{fig:WassersteinSELL}
\end{figure}

\begin{figure}[!htbp]
    \centering
    \includegraphics[width=\linewidth]{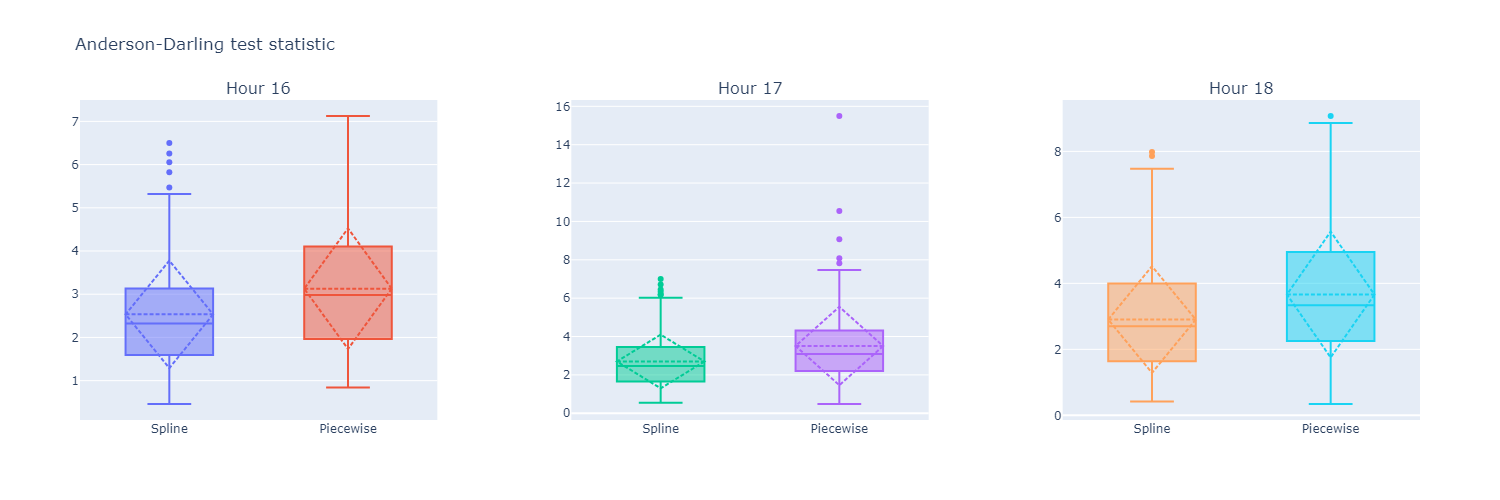}
    \caption{Results of the Anderson-Darling test statistic for the distribution of the time-changes inter arrival times compared to an Exp(1) distribution (\textbf{Buy Side}).}
\end{figure}

\begin{figure}[!htbp]
    \centering
    \includegraphics[width=\linewidth]{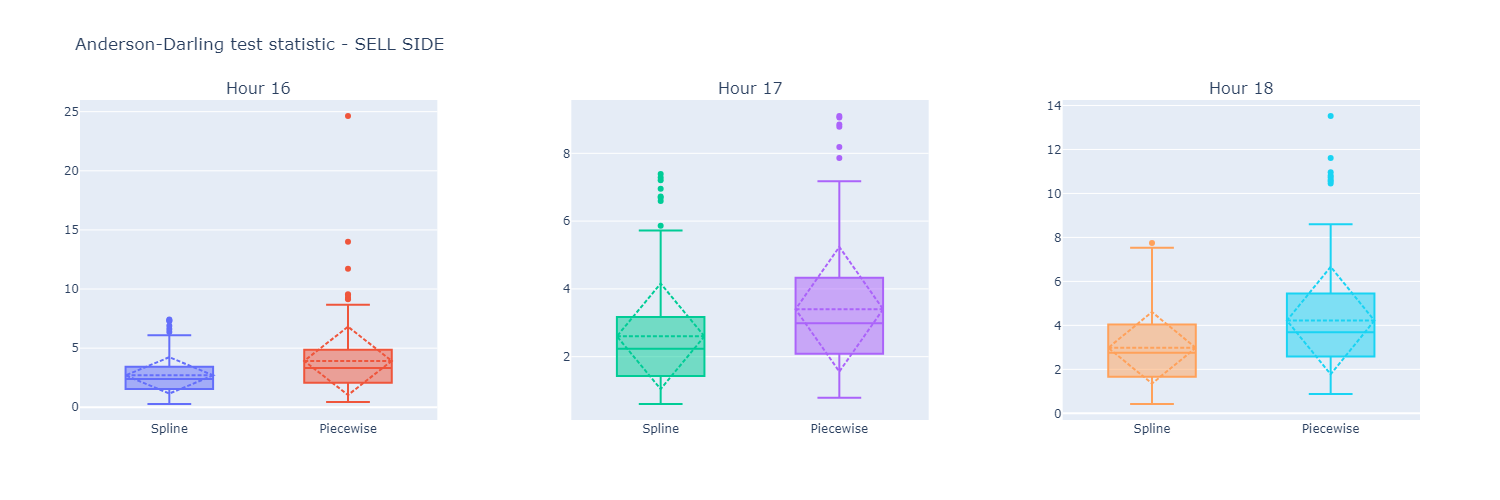}
    \caption{Results of the Anderson-Darling test statistic for the distribution of the time-changes inter arrival times compared to an Exp(1) distribution (\textbf{Sell Side}).}
\end{figure}

\begin{figure}[!htbp]
    \centering
    \includegraphics[width=\linewidth]{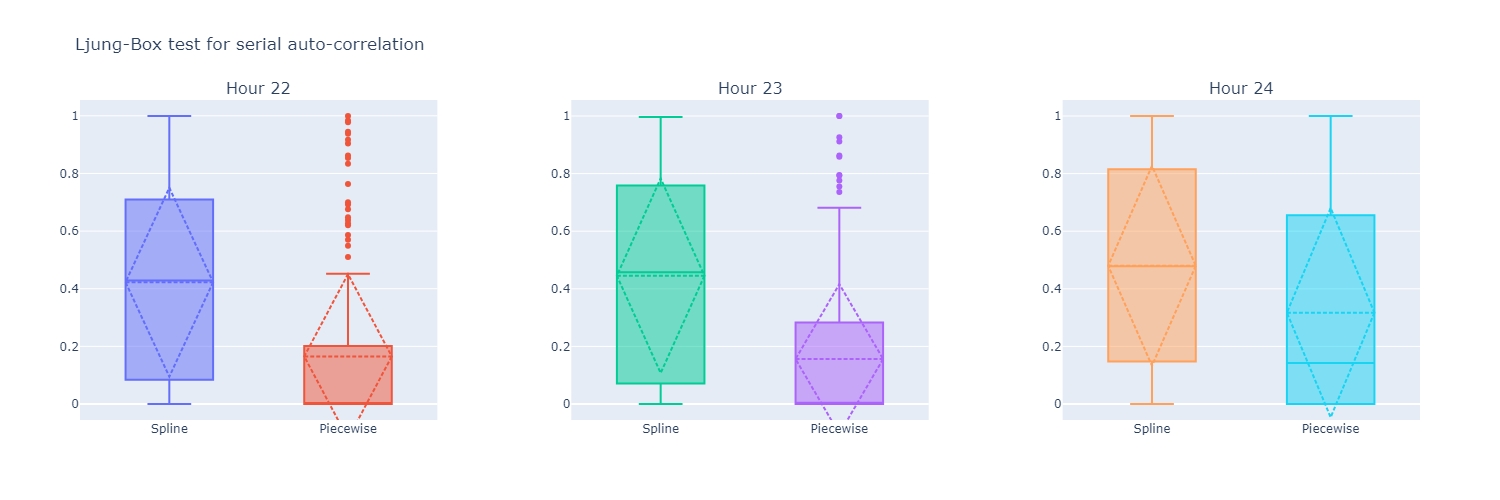}
    \caption{Results of the p-values Ljung-Box test for serial auto-correlation (\textbf{Buy Side}).}
    \label{fig:Ljung-BoxBuy}
\end{figure}

\begin{figure}[!htbp]
    \centering
    \includegraphics[width=\linewidth]{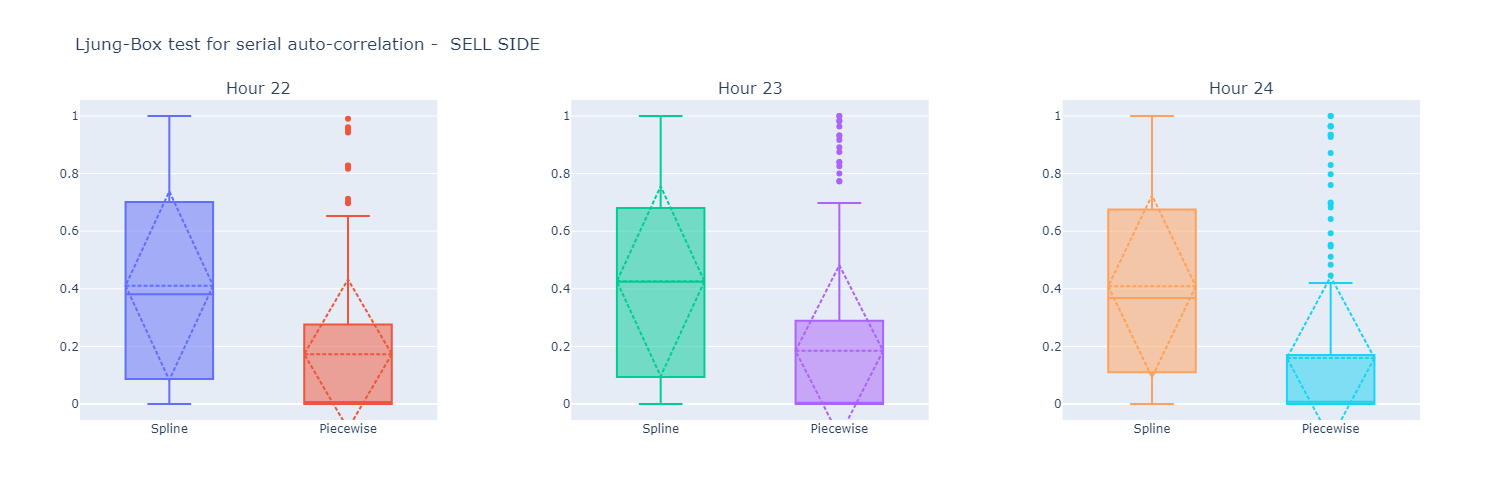}
    \caption{Results of the p-values Ljung-Box test for serial auto-correlation (\textbf{Sell Side}).}
    \label{fig:Ljung-BoxSell}
\end{figure}

\section{Complementary Results}
\label{sec:complementaryRes}
 
\begin{table}
 \centering
 \resizebox{\linewidth}{!}{
 \begin{tabular}{l|rrrr|rrrr}
 \toprule
 \textbf{Delivery Hour} & $a_{11}$ $(\frac{1}{s})$ & $a_{12}$ $(\frac{1}{s})$ & $a_{21}$ $(\frac{1}{s})$ & $a_{22}$ $(\frac{1}{s})$ & $\beta_{11}$ & $\beta_{12}$ & $\beta_{21}$ & $\beta_{22}$ \\
 \midrule
 1 & 0.14 $\pm$ 0.06 & 0.04 $\pm$ 0.03 & 0.03 $\pm$ 0.02 & 0.16 $\pm$ 0.06 & 0.63 $\pm$ 0.31 & 0.54 $\pm$ 0.29 & 0.64 $\pm$ 0.38 & 0.49 $\pm$ 0.24 \\ 
 2 & 0.12 $\pm$ 0.05 & 0.03 $\pm$ 0.02 & 0.02 $\pm$ 0.02 & 0.15 $\pm$ 0.07 & 0.55 $\pm$ 0.24 & 0.50 $\pm$ 0.22 & 0.54 $\pm$ 0.30 & 0.56 $\pm$ 0.22 \\ 
 3 & 0.12 $\pm$ 0.05 & 0.03 $\pm$ 0.03 & 0.02 $\pm$ 0.02 & 0.13 $\pm$ 0.06 & 0.55 $\pm$ 0.25 & 0.59 $\pm$ 0.33 & 0.64 $\pm$ 0.39 & 0.54 $\pm$ 0.18 \\ 
 4 & 0.13 $\pm$ 0.06 & 0.03 $\pm$ 0.02 & 0.03 $\pm$ 0.02 & 0.12 $\pm$ 0.06 & 0.62 $\pm$ 0.32 & 0.53 $\pm$ 0.18 & 0.61 $\pm$ 0.33 & 0.52 $\pm$ 0.24 \\ 
 5 & 0.13 $\pm$ 0.07 & 0.04 $\pm$ 0.03 & 0.04 $\pm$ 0.03 & 0.14 $\pm$ 0.06 & 0.59 $\pm$ 0.34 & 0.63 $\pm$ 0.34 & 0.69 $\pm$ 0.46 & 0.50 $\pm$ 0.23 \\ 
 6 & 0.15 $\pm$ 0.07 & 0.04 $\pm$ 0.03 & 0.03 $\pm$ 0.03 & 0.13 $\pm$ 0.06 & 0.64 $\pm$ 0.38 & 0.57 $\pm$ 0.31 & 0.61 $\pm$ 0.39 & 0.54 $\pm$ 0.25 \\ 
 7 & 0.15 $\pm$ 0.06 & 0.02 $\pm$ 0.02 & 0.03 $\pm$ 0.03 & 0.14 $\pm$ 0.06 & 0.69 $\pm$ 0.35 & 0.48 $\pm$ 0.24 & 0.57 $\pm$ 0.37 & 0.53 $\pm$ 0.26 \\ 
 8 & 0.13 $\pm$ 0.06 & 0.03 $\pm$ 0.03 & 0.03 $\pm$ 0.02 & 0.14 $\pm$ 0.05 & 0.59 $\pm$ 0.28 & 0.54 $\pm$ 0.26 & 0.57 $\pm$ 0.27 & 0.47 $\pm$ 0.19 \\ 
 9 & 0.14 $\pm$ 0.07 & 0.03 $\pm$ 0.02 & 0.03 $\pm$ 0.02 & 0.15 $\pm$ 0.06 & 0.70 $\pm$ 0.30 & 0.53 $\pm$ 0.22 & 0.58 $\pm$ 0.37 & 0.54 $\pm$ 0.30 \\ 
 10 & 0.13 $\pm$ 0.05 & 0.03 $\pm$ 0.02 & 0.02 $\pm$ 0.02 & 0.14 $\pm$ 0.07 & 0.61 $\pm$ 0.29 & 0.55 $\pm$ 0.31 & 0.61 $\pm$ 0.32 & 0.62 $\pm$ 0.31 \\ 
 11 & 0.13 $\pm$ 0.05 & 0.03 $\pm$ 0.02 & 0.03 $\pm$ 0.02 & 0.13 $\pm$ 0.05 & 0.59 $\pm$ 0.23 & 0.52 $\pm$ 0.18 & 0.54 $\pm$ 0.17 & 0.51 $\pm$ 0.23 \\ 
 12 & 0.14 $\pm$ 0.05 & 0.03 $\pm$ 0.02 & 0.02 $\pm$ 0.02 & 0.12 $\pm$ 0.05 & 0.57 $\pm$ 0.24 & 0.59 $\pm$ 0.30 & 0.54 $\pm$ 0.24 & 0.50 $\pm$ 0.11 \\ 
 13 & 0.12 $\pm$ 0.04 & 0.04 $\pm$ 0.03 & 0.03 $\pm$ 0.02 & 0.13 $\pm$ 0.06 & 0.53 $\pm$ 0.24 & 0.60 $\pm$ 0.36 & 0.59 $\pm$ 0.28 & 0.53 $\pm$ 0.25 \\ 
 14 & 0.13 $\pm$ 0.05 & 0.04 $\pm$ 0.02 & 0.04 $\pm$ 0.02 & 0.14 $\pm$ 0.04 & 0.52 $\pm$ 0.25 & 0.60 $\pm$ 0.39 & 0.68 $\pm$ 0.39 & 0.55 $\pm$ 0.24 \\ 
 15 & 0.12 $\pm$ 0.04 & 0.04 $\pm$ 0.02 & 0.03 $\pm$ 0.02 & 0.11 $\pm$ 0.04 & 0.49 $\pm$ 0.25 & 0.65 $\pm$ 0.42 & 0.58 $\pm$ 0.32 & 0.47 $\pm$ 0.15 \\ 
 16 & 0.12 $\pm$ 0.05 & 0.04 $\pm$ 0.02 & 0.04 $\pm$ 0.03 & 0.11 $\pm$ 0.04 & 0.55 $\pm$ 0.24 & 0.55 $\pm$ 0.26 & 0.60 $\pm$ 0.27 & 0.45 $\pm$ 0.14 \\ 
 17 & 0.12 $\pm$ 0.04 & 0.03 $\pm$ 0.02 & 0.03 $\pm$ 0.02 & 0.11 $\pm$ 0.03 & 0.48 $\pm$ 0.22 & 0.59 $\pm$ 0.35 & 0.54 $\pm$ 0.30 & 0.46 $\pm$ 0.18 \\ 
 18 & 0.11 $\pm$ 0.05 & 0.03 $\pm$ 0.02 & 0.03 $\pm$ 0.02 & 0.12 $\pm$ 0.04 & 0.52 $\pm$ 0.24 & 0.59 $\pm$ 0.36 & 0.61 $\pm$ 0.36 & 0.50 $\pm$ 0.13 \\ 
 19 & 0.13 $\pm$ 0.05 & 0.03 $\pm$ 0.02 & 0.03 $\pm$ 0.03 & 0.12 $\pm$ 0.06 & 0.52 $\pm$ 0.23 & 0.58 $\pm$ 0.39 & 0.61 $\pm$ 0.37 & 0.48 $\pm$ 0.21 \\ 
 20 & 0.12 $\pm$ 0.04 & 0.04 $\pm$ 0.03 & 0.04 $\pm$ 0.02 & 0.14 $\pm$ 0.06 & 0.56 $\pm$ 0.28 & 0.57 $\pm$ 0.32 & 0.57 $\pm$ 0.28 & 0.47 $\pm$ 0.15 \\ 
 21 & 0.13 $\pm$ 0.05 & 0.03 $\pm$ 0.02 & 0.04 $\pm$ 0.03 & 0.14 $\pm$ 0.05 & 0.60 $\pm$ 0.29 & 0.58 $\pm$ 0.34 & 0.58 $\pm$ 0.27 & 0.50 $\pm$ 0.20 \\ 
 22 & 0.12 $\pm$ 0.05 & 0.03 $\pm$ 0.02 & 0.03 $\pm$ 0.03 & 0.13 $\pm$ 0.05 & 0.54 $\pm$ 0.27 & 0.55 $\pm$ 0.32 & 0.58 $\pm$ 0.32 & 0.46 $\pm$ 0.18 \\ 
 23 & 0.13 $\pm$ 0.06 & 0.03 $\pm$ 0.02 & 0.03 $\pm$ 0.03 & 0.12 $\pm$ 0.05 & 0.59 $\pm$ 0.29 & 0.58 $\pm$ 0.31 & 0.55 $\pm$ 0.28 & 0.47 $\pm$ 0.14 \\ 
 24 & 0.12 $\pm$ 0.05 & 0.03 $\pm$ 0.02 & 0.03 $\pm$ 0.02 & 0.13 $\pm$ 0.04 & 0.52 $\pm$ 0.24 & 0.49 $\pm$ 0.27 & 0.59 $\pm$ 0.30 & 0.48 $\pm$ 0.19 \\ 
 \bottomrule
 \end{tabular}}
 \caption{Medians ($\pm$ std) of the self/cross-excitation and exponential decay parameters for all delivery hours.}
 \label{tab:combined-table}
 \end{table}

\begin{figure}[!htbp]
    \centering
    \includegraphics[width=\linewidth]{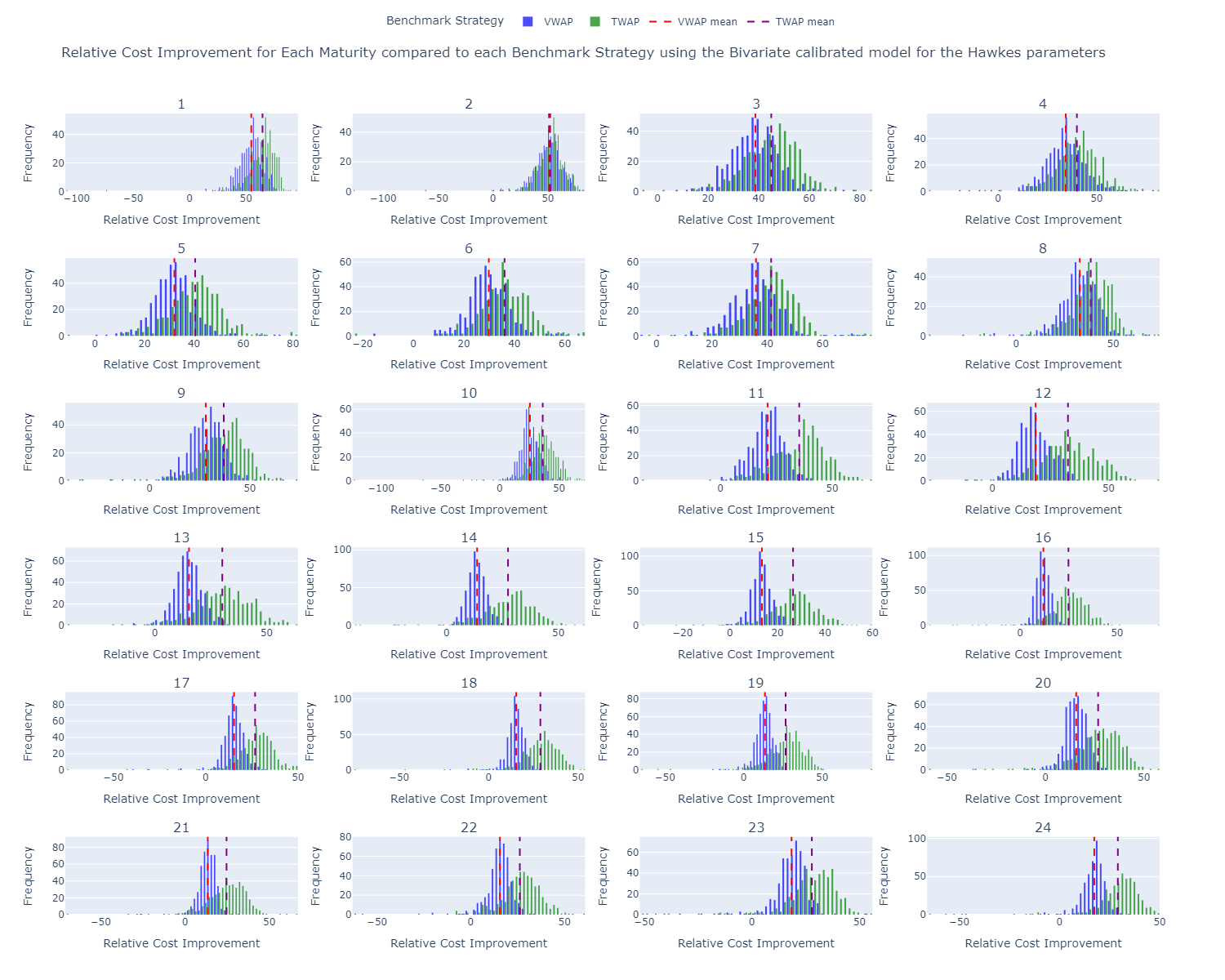}
    \caption{Relative Cost Improvement in terms of cumulative costs (\%) for each hourly product $r_{BenchMark}$ (bivariate).}
    \label{fig:RelativeCostImprov2}
\end{figure}

\begin{figure*}[!htbp]
    \centering
    \begin{subfigure}[b]{0.45\textwidth}
    \centering
    \includegraphics[width=\textwidth]{Figures/newplot_-_2024-07-31T144838.060__1__copy_4.png}
    \end{subfigure}
\hfill
    \begin{subfigure}[b]{0.45\textwidth}
    \centering
    \includegraphics[width=\textwidth]{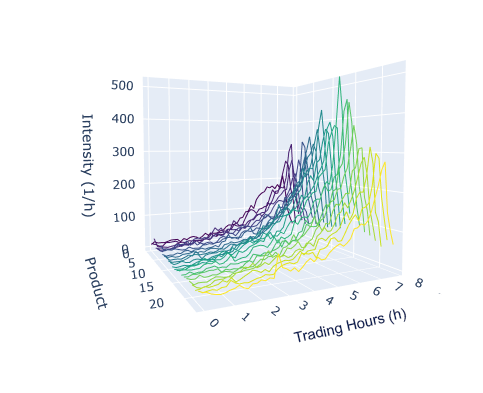}
    \end{subfigure}
    \caption{Empirical intensities of Buy/Sell Market orders (1/h) (Mean across all trading sessions) calculated for each product using $\hat{\lambda}(t) = \mathbb{E}\left(N(t+\delta t) - N(t)\right)/\delta t$  from the number of realizations of Buy/Sell MOs through counting the average number of events during $[t,t+\delta t]$, where $N(t)$ is the counting process for $\lambda(t)$.}
    \label{fig: EmpiricalBaseline}
\end{figure*}\noindent

\end{document}